\newcommand{\figleft}{{\em (Left)}\xspace }
\newcommand{\figcenter}{{\em (Center)}\xspace}
\newcommand{\figright}{{\em (Right)}\xspace}
\newcommand{\figtop}{{\em (Top)}\xspace}
\newcommand{\figbottom}{{\em (Bottom)}\xspace}
\newcommand\myeq{\mathrel{\stackrel{\makebox[0pt]{\mbox{\normalfont\tiny def}}}{=}}}
\newcommand{\br}{\texttt{BR}}
\def\1{\bm{1}}
\def\vzero{{\bm{0}}}
\def\vone{{\bm{1}}}
\def\va{{\bm{a}}}
\def\ve{{\bm{e}}}
\def\vm{{\bm{m}}}
\def\vp{{\bm{p}}}
\def\vr{{\bm{r}}}
\def\vu{{\bm{u}}}
\def\vx{{\bm{x}}}
\def\vz{{\bm{z}}}
\DeclareMathAlphabet{\mathsfit}{\encodingdefault}{\sfdefault}{m}{sl}
\SetMathAlphabet{\mathsfit}{bold}{\encodingdefault}{\sfdefault}{bx}{n}
\def\gA{{\mathcal{A}}}
\def\gF{{\mathcal{F}}}
\def\gL{{\mathcal{L}}}
\def\sR{{\mathbb{R}}}
\newcommand{\E}{\mathbb{E}}
\newcommand{\softmax}{\mathrm{softmax}}
\newcommand{\softplus}{\mathrm{softplus}}
\newcommand{\KL}{D_{\mathrm{KL}}}
\DeclareMathOperator*{\argmax}{arg\,max}
\DeclareMathOperator*{\argmin}{arg\,min}
\newtheorem*{rep@theorem}{\rep@title}
\newcommand{\newreptheorem}[2]{%
\newenvironment{rep#1}[1]{%
 \def\rep@title{#2 \ref{##1}}%
 \begin{rep@theorem}}%
 {\end{rep@theorem}}}
\newtheorem{theorem}{Theorem}
\newtheorem{proposition}{Proposition}[section]
\newtheorem{definition}{Definition}
\newtheorem*{remark}{Remark}
\newcommand{\expnumber}[2]{{#1}\mathrm{e}{#2}}
\newcommand{\koh}{{\tt king-of-the-hill}\xspace}
\newcommand{\geminipro}{{\tt gemini-1.5-pro-api-0514}\xspace}
\newcommand{\geminiflash}{{\tt gemini-1.5-flash-api-0514}\xspace}
\newcommand{\gpto}{{\tt gpt-4o-2024-05-13}\xspace}
\newcommand{\gptt}{{\tt gpt-4-turbo-2024-04-09}\xspace}
\newcommand{\arenahard}{{\tt arena-hard-v0.1}\xspace}
\newcommand{\npgs}{$N$-player general-sum\xspace}
\newcommand{\tpzs}{$2$-player zero-sum\xspace}
\title{Re-evaluating Open-ended Evaluation of \\ Large Language Models}
\author{Siqi Liu\thanks{Equal contribution.}\,\,, Ian Gemp\footnotemark[1]\,\,, Luke Marris, Georgios Piliouras, Nicolas Heess, Marc Lanctot \\
Google DeepMind\\
London, UK \\
\texttt{\{liusiqi,imgemp,marris,gpil,heess,lanctot\}@google.com}
}
\begin{document}

\maketitle

\begin{abstract}
Evaluation has traditionally focused on ranking candidates for a specific skill. Modern generalist models, such as Large Language Models (LLMs), decidedly outpace this paradigm. Open-ended evaluation systems, where candidate models are compared on user-submitted prompts, have emerged as a popular solution. Despite their many advantages, we show that the current Elo-based rating systems can be susceptible to and even reinforce biases in data, intentional or accidental, due to their sensitivity to redundancies. To address this issue, we propose evaluation as a 3-player game, and introduce novel game-theoretic solution concepts to ensure robustness to redundancy. We show that our method leads to intuitive ratings and provide insights into the competitive landscape of LLM development.
\end{abstract}

\section{Introduction}

We can only improve what we measure, yet measuring the performance of Large Language Models (LLMs) has become an elusive endeavor owing to their breadth and depth of capabilities. Real-world benchmarks are costly to curate, increasingly requiring feedback from human domain experts \citep{hendrycks2021measuring, rein2023gpqa}. Synthetic benchmarks can help, but their relevance to real-world performance is less clear \citep{zhang-etal-2024-bench, hsieh2024ruler}. An even more vexing challenge of static benchmarks is that of test set contamination, a phenomenon difficult to prevent despite efforts \citep{golchin2024time,balloccu2024leak,palavalli2024taxonomydatacontaminationlarge}. Enumerating skills of interests with narrowly defined static benchmarks seems to be an uphill battle from the outset, as frontier models become generally capable.

An emerging trend in LLM evaluation is therefore to rely on open-ended evaluation systems, a notable example being the LMSYS Chatbot Arena \citep{chiang2024chatbot}. 
In such a system, users submit prompts of interest, with each model assigned an Elo score \citep{elo1978rating} based on how they compare to each other on all prompts. In contrast to static benchmarks, this open-ended approach enjoys liveness, diversity and scale, lending itself to become an important reference in LLM development. Despite an intuitive sense of progress, issues around redundancy, bias and quality of crowdsourced data have been raised~\citep{chiang2024chatbot, ahuja2023mega, arenahard2024}. Several recent studies reverted back to centralized curation for quality~\citep{alpaca, lee2024holistic, white2024livebench}. Increasing commercial efforts have been invested in private and proprietary evaluation too.

Perhaps this tension between quality and open-endedness is to be expected in LLM evaluation. Biases, redundancies and quality issues in the prompt distribution can affect Elo ratings, as they reflect performance {\em on average}.
This along with other identified deficiencies of the Elo system~\citep{balduzzi2018re, bertrand2023limitations, lanctot2023evaluating} raise crucial questions for LLM development: how does an Elo-based open-ended evaluation system affect model development today, and how can we mitigate its drawbacks, if any, in the future? In this paper, we provide an empirical simulation-based investigation of the former and lean on game theory for a solution to the latter.

The connection between evaluation and game theory needs unpacking. Consider a set of agents and a set of tasks, a naive approach to evaluation would rank agents by their average performance over tasks, propagating biases and redundancies in the task set. A game-theoretic approach~\citep{balduzzi2018re}, would be to consider evaluation as an {\em agent-vs-task} game where the {\em agent} ({\em task}) player chooses one of its agents (tasks) and is rewarded (penalized) by the agent's performance on the task. This game-theoretic perspective accomplishes two goals simultaneously. First, it lets the evaluation system designers express their goals in players' objectives: here, \citet{balduzzi2018re} evaluates agents under {\em adversarial} task selection. Second, a game-theoretic equilibrium decides which actions are played during evaluation: quality and redundancies in players' action sets do not matter. It is in this sense that game theory is well suited for evaluation systems that are open-ended.

Applying game theory to LLM evaluation however has its own challenges. Indeed, the decision of \citet{balduzzi2018re} in comparing agents under {\em adversarial} task selection was guided by theoretical benefits. In \tpzs games, approximating a Nash equilibrium (NE, \cite{nash1950non}) is computationally tractable. NEs are also interchangeable in this setting as playing any NE guarantees zero exploitability. Beyond this setting, both benefits are lost: approximating NEs is computationally hard in the worst case \citep{daskalakis2006note} and despite recent progress important challenges remain \citep{gemp2022sample, gemp2024approximating}.  Equilibrium selection in this generalised setting remains a long-standing challenge too \citep{harsanyi1988general, rinott2000number}. For instance, driving on either side of the road is an equilibrium, but it is unclear which equilibrium should be used for evaluation. Past attempts at game-theoretic evaluation have therefore been restricted to the \tpzs settings when LLM evaluation calls for at least 3 players ({e.g., \em model-vs-model-vs-prompt}).

In this paper, we make several contributions that lead up to our equilibrium rating framework:
\begin{enumerate}
    \item We show, via a simulated example (Section~\ref{sec:illustrative}), the risk of models specializing in a few skills, at the expense of others, as they maximise their Elo ratings. Similarly, popular practice in prompt selection further reinforces this trend;
    \item We introduce novel equilibrium solution concepts for \npgs games that are unique and clone-invariant, a pre-requisite for our equilibrium rating method (Section~\ref{sec:method});%
    \item We show our method scales to a real-world LLM evaluation dataset (Section~\ref{sec:exact_clone_invariance}) and provide ratings that are invariant to redundancy and correspond to our intuition in the sense of risk-dominance \citep{harsanyi1988general}, with empirical evidence (Appendix~\ref{app:fragile_equilibria});
    \item We provide examples of analyzing these equilibrium structures of the game, drawing insights into the competitive landscape of LLM evaluation (Section~\ref{sec:interpreting}).
\end{enumerate}

\subsection{Elo rating improvement path: a simulated example} \label{sec:illustrative}

With models continually improving their Elo ratings in systems such as LMSYS Chatbot Arena \citep{li2024crowdsourced}, it is worth asking if higher Elo scores translate to meaningful progress across skills of interest. This is difficult to answer from real-world data: we cannot replicate LLM development at scale nor can we disentangle factors driving model development besides maximizing leaderboard ratings. A synthetic example can provide insights in a controlled setting.

Consider $S$ orthogonal skills of interests, $M$ models and $P$ prompts with each prompt a probability vector $\vp \in \Delta^S$ over the skills and each model a vector $\vm \in {\sR}^S_+$, representing its competencies in each skill.
We can then define the utility of selecting model $\vm_i$ when compared to model $\vm_j$ on prompt $\vp_k$, as $u_m(\vp_k, \vm_i, \vm_j) = \vp_k^T (\vm_i - \vm_j)$ with $i, j \in [M]$ and $k \in [K]$.
A less common but equally valid question is what should be the utility, if any, for selecting a prompt.
We follow a similar definition as \citet{arenahard2024} and define the utility in choosing prompt $\vp_k$ as $u_p(\vp_k, \vm_i, \vm_j) = |u_m(\vp_k, \vm_i, \vm_j)|$. The {\em separability} of a prompt is then $\frac{1}{M^2} \sum_{ij}{u_p(\vp_k, \vm_i, \vm_j)}$, consistent with the prompt selection criterion used in offline benchmarks such as \arenahard.

We now observe how this system evolves with rating-maximizing players. Consider two settings: a) the ``initial prompts'' setting where the set of prompts is fixed but the set of models expands; and b) the ``additional prompts'' setting where prompt and model players alternate to introduce new prompts and models. We use a simple evolutionary process for our simulation (see Appendix~\ref{app:simulated_model_prompt_improvement} for pseudocode). Let $P_t$ and $M_t$ be the number of prompts and models at iteration $t$ and $P_0$, $M_0$ the number of initial prompts and models sampled from $\mathrm{Dirichlet}(\vone_{1:S})$. We introduce a model at each iteration which is a sum of improvement vectors sampled from $\mathrm{Dirichlet}(\vone_{1:S})$, such that the new addition receives the highest rating according the rating method used (i.e. Elo or our equilibrium-based method). In the ``additional prompts'' setting, a best-of-64 prompt is added at each iteration, selected by their separability when Elo ratings are used, and by their equilibrium ratings otherwise.

\begin{figure}
    \centering
    \includegraphics[width=\textwidth]{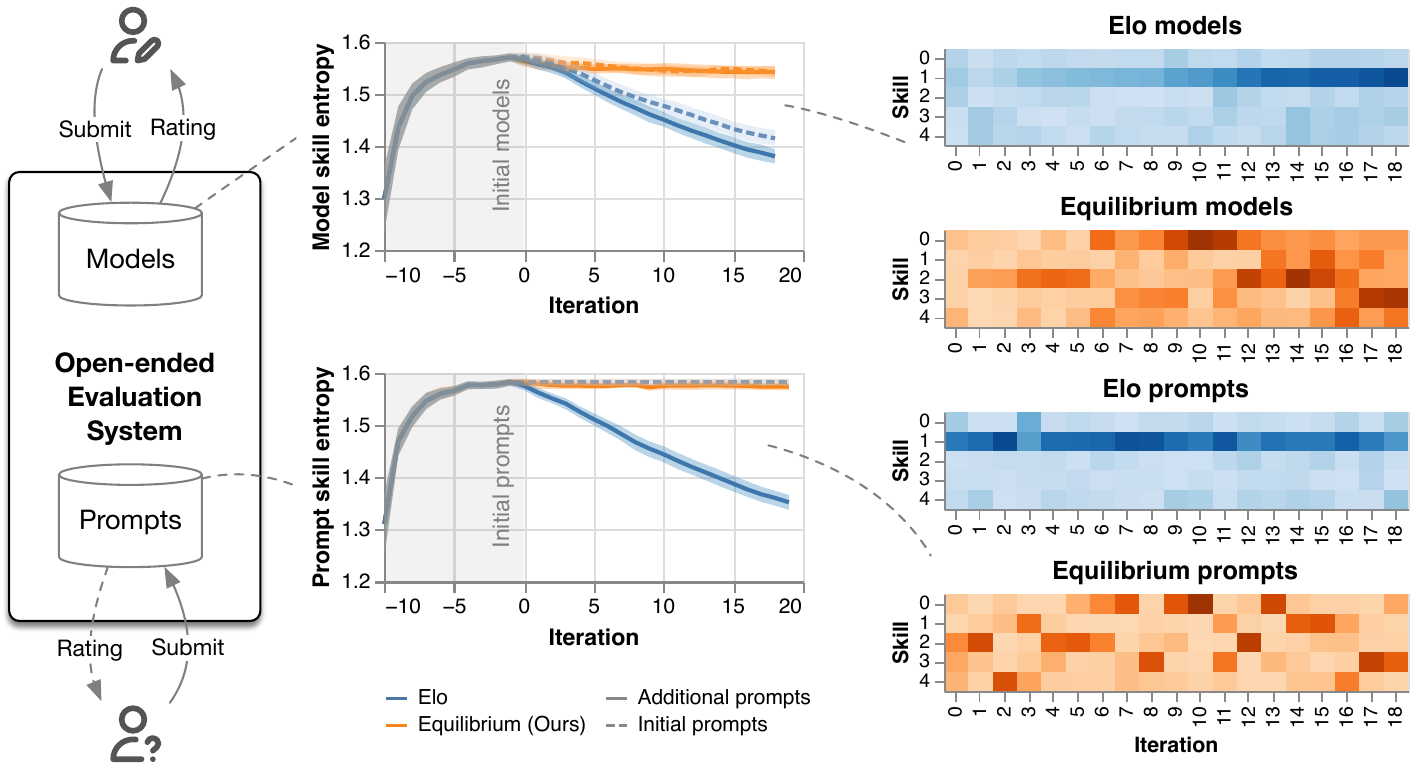}
    \caption{\figleft We simulate the effect of the rating method on model development with users submitting highly rated models (and prompts) iteratively. \figcenter We show how model and prompt skill entropy evolves under different rating methods over 32 trials. \figright We show an example sequence of models and prompts maximising their respective ratings. Darker indicates higher value.}
    \label{fig:elo_vs_equilibrium}
\end{figure}

Figure~\ref{fig:elo_vs_equilibrium}~\figcenter shows our findings. Let $H(\bar{\vp}_t)$, $H(\bar{\vm}_t)$ be the prompt and model skill entropy at iteration $t$ with $\bar{\vp}_t = \frac{1}{P_t} \sum^{P_t}_i \vp_i$ and $\bar{\vm}_t = \frac{1}{M_t} \sum^{M_t}_i \vm_i$ and $H$ the Shannon entropy. The Elo rating method leads to a consistent decline in skill entropy: the sequence of models improve along specific skill dimensions that are over-represented in the fixed set of initial prompts (dashed). Adding prompts with high separability further reinforces this trend in both model and prompt skill entropy (solid). 
We offer an intuitive explanation. Improvement on the Elo ratings or the separability metric reflects improvements against the {\em average}. At iteration $t$, the expected utility to model $\vm_i$ is given by $u_m(\bar{\vp}_t, \vm_i, \bar{\vm}_t)$ with its gradient defined by $\bar{\vp}_t$ --- improving on the most prevalent skill in $\bar{\vp}_t$ therefore leads to the steepest ascent in model utility.
Similarly, the gradient for a prompt vector $\vp_k$ is defined by the absolute deviation of the model vectors along each skill dimension $\frac{1}{M_t} \sum^t_i | \vm_i - \bar{\vm}_t |$. Prompts that target the skill dimension with the highest ``spread'' averaged across all model pairs are therefore the most highly rated.
Figure~\ref{fig:elo_vs_equilibrium}~\figright illustrates this phenomenon from a single trial. The Elo-maximising sequence of models specialises in skill $1$ due to prompt redundancy while the separability-maximising sequence of prompts remains focused on skill 1, at the expense of others.

The underlying challenge, one that we address, is to propose a practical rating method that compares models and prompts in a way that is intuitive and robust to redundancies. Figure~\ref{fig:elo_vs_equilibrium}~\figcenter suggests that maximising equilibrium ratings preserves skill entropy. Indeed, Figure~\ref{fig:elo_vs_equilibrium} \figright shows models focusing on different skills across iterations. As prompts are no longer measured against the {\em average} model pairs, they also remain diverse. In both cases, ratings are computed at a game-theoretic equilibrium distribution, instead of the uniform. We now present our equilibrium rating framework.

\section{Background}
\label{sec:background}

\paragraph{Normal-form game}

A normal-form game is a tuple $(N, \gA, u)$ where $N$ is a finite set of players $N = \{1, \ldots, n\}$ indexed by $i$, a tuple of strategy (action) sets $\gA = (\gA_1, \ldots, \gA_n)$, and a tuple of utility functions $u = (u_1, \ldots, u_n)$ with $\gA_i$ and $u_i: \gA \rightarrow \mathbb{R}$ being player $i$'s strategy set and utility respectively. Let $\va \in \gA = (a_1, \ldots, a_n)$ with $a_i \in \gA_i$ for all $i$ denote a strategy \emph{profile}. We allow strategy profiles to be selected randomly according to a distribution $\vx \in \Delta(\gA)$ over {\it joint} actions. Let $x_i$ denote the marginal distribution over player $i$'s strategy set $\gA_i$, i.e., $\vx$ with all players $j \ne i$ marginalized out. Likewise, let $x_{-i}$ denote the distribution with player $i$ marginalized out. We call $\vx$ a \emph{pure} strategy if it places all mass on a single action profile and \emph{mixed} otherwise. Each player's utility function is naturally extended to randomized strategy profiles by considering its expected value $u_i(\vx) = \E_{\va \sim \vx}[u_i(\va)]$. Similarly, let $u_i(x_i, x_{-i}) = \E_{\substack{a_i \sim x_i \\ a_{-i} \sim x_{-i}}}[u_i(\va)]$.

\paragraph{Coarse Correlated Equilibrium (CCE) and Nash Equilibrium (NE)}
An equilibrium is a strategy profile $\vx$ from which no player has an incentive to unilaterally deviate.
Define player $i$'s incentive to deviate to $x'_i \in \Delta(\gA_i)$ unilaterally as $\text{regret}_i(x'_i, \vx) = u_i(x'_i, x_{-i}) - u_i(\vx)$, where $\Delta(\gA_i)$ is the simplex over $\gA_i$.
Then, player $i$'s maximum regret for deviating from $\vx$, is defined as:
\begin{equation} \label{eq:deviation_gain}
    \max_{x'_i \in \Delta(\gA_i)} \Big[ \text{regret}_i(x'_i, \vx) \Big] = \max_{x'_i \in \Delta(\gA_i)} \Big[ u_i(x'_i, x_{-i}) \Big] - u_i(\vx).
\end{equation}

The profile $\vx$ is an approximate Coarse Correlated Equilibrium~\citep{aumann1974subjectivity, aumann1987correlated} ($\epsilon$-CCE) iff $\forall i, \max_{x'_i \in \Delta(\gA_i)} \Big[ \text{regret}_i(x'_i, \vx) \Big] \le \epsilon$. If $\vx$ can be factorized into player marginals such that players cannot correlate, i.e., $\vx = \bigtimes_{i=1}^N x_i$, then $\vx$ is also an $\epsilon$-NE. NEs are a subset of CCEs.

\paragraph{Equilibrium Selection}
Games can have many equilibria. Additional criteria are often introduced to make their selection unique. The set of CCEs is always convex, and so any strictly convex objective function such as negative Shannon entropy can used to select a unique equilibrium.

In contrast, the set of NEs need not be convex, however, several solutions have been proposed to solve for unique Nash equilibria in general-sum games~\citep{harsanyi1988general}. The LLE was originally defined by~\citet{mckelvey1995quantal} along with their introduction of quantal response (logit) equilibria (QREs) which satisfy the following fixed point equation for all players $i \in N$:
\begin{align}
    x_i &= \softmax(\frac{1}{\tau} \nabla_{x_i} u_i) \label{eqn:qre_fp}
\end{align}
where $\nabla_{x_i} u_i$ is the gradient of $u_i$ w.r.t. $x_i$. QREs are defined by a temperature parameter $\tau$ and can be interpreted as the Nash equilibria of a game with payoffs perturbed by Gumbel$(0, \tau)$ noise. Computing the LLE involves tracing a continuum of QREs, starting at temperature $\tau = \infty$ (corresponding to the uniform strategy profile) and ending at the LLE in the limit of $\tau = 0$. The LLE is unique in all games except a $0$-measure set~\citep{mckelvey1995quantal,goeree2003risk}. Another reason to solve for an LLE is that it falls into the family of homotopy methods \citep{herings2010homotopy}, which were shown to select {\em risk-dominant} equilibria in some general settings, a Nobel prize winning result of \citet{harsanyi1988general}. Empirically, LLEs have also been shown to approximate human play in games \citep{mckelvey1995quantal,goeree2003risk}.

\section{Method} \label{sec:method}

We now describe our rating method in terms of gamification, equilibrium solving and its selection. In gamification, we endow prompt and model players with utility functions, partly inspired by prior works, such that actions played at an equilibrium reflect our intuition. We note that our specific gamification defines an \npgs game where equilibrium solving and selection requires more careful consideration. For equilibrium solving, we build on existing methods for approximating NEs and CCEs, reformulated to accommodate entropy-based techniques that select unique equilibria and explain why ratings derived from these equilibria remain vulnerable to manipulation in the face of redundant actions. We then propose a family of algorithms based on a novel kernelized entropy that select unique equilibria yet are also robust to redundant actions. Finally, for a given equilibrium solution $\vx$, we define the rating of an action $a_i$ to be $\text{regret}_i(a_i, \vx)$.

\subsection{Gamification: Evaluation via a Game Between Models and Prompts}
\label{sec:gamification}

We study a 3-player general-sum game in our experiments. Consider a {\em prompt} player with $a_p \in \gA_p$ the set of prompts, a {\em king} player and a {\em rebel} player each with actions $a_m \in \gA_m$ the set of models. Let $u_k(a_p, a_m, a'_m) \in \{-1, -\sfrac{1}{2}, 0, +\sfrac{1}{2}, +1\}$ be the utility function to the king player representing a preference towards king model response $a_m$ over the rebel model response $a'_m$ on a prompt $a_p$. The prompt player is rewarded for separating the models, with $u_p(a_p, a_m, a'_m) = |u_k(a_p, a_m, a'_m)|$. The rebel player receives $u_r(a_p, a_m, a'_m) = -u_k(a_p, a_m, a'_m)$ except for when $a_m = a'_m$ in which case $u_r(a_p, a_m, a'_m) = -1$. This asymmetry discourages the same model being played by both model players deterministically with a prompt player indifferent over its actions. We refer to this game as \koh as it favours the king player, leaving the rebel player to mount its best resistance without relying on some of the best models that the king player may choose. We refer to the king player ratings as the model ratings in our results. %

Given a collection of prompts and models, the utility function can be tabulated with $|\gA_p| \times |\gA_m|^2$ pairwise preference ratings. 
We query a \geminipro judge for preference ratings similar to \citet{zheng2023judging, verga2024replacing, dubois2024length, dubois2024alpacafarm, chiang2023can, liu2023g}. We caveat that our results could therefore suffer from {\em self-preference} \citep{panickssery2024llm} and should {\em not} be viewed as an objective assessment of frontier LLMs.

\subsection{Equilibrium Solving}\label{sec:eq_solving}

For an instance of the evaluation game, we can compute different equilibrium solutions $\vx$ which then define ratings. Here we present two options as they are unique, scalable and lead to intuitive, invariant ratings when combined with a selection criteria that we describe in Section~\ref{sec:selection}.

\paragraph{Nash Equilibrium (NE)}

While LLE computation~\citep{turocy2005dynamic} is typically formulated as solving a differential equation that evolves the temperature $\tau$ towards $0$ while obeying the logit constraint in Equation~\eqref{eqn:qre_fp}, i.e., $x_i = \softmax(\frac{1}{\tau} \nabla_{x_i} u_i)$ for all $i$, this is also equivalent to satisfying the constraint $x_i = \argmax_{z_i \in \Delta} u_i(z_i, x_{-i}) + \tau S(x_i)$ where $S(x_i)$ is the Shannon entropy of $x_i$. In this work, we choose another condition
\begin{equation}
    x_i = \argmax_{z_i \in \Delta} \Big\{ u_i^{\tau}(z_i, x_{-i}) \myeq u_i(z_i, x_{-i}) - \tau \KL(z_i || t_i) \Big\}
    \label{eq:lle_strategy}
\end{equation}
which is equivalent in the case where the target strategy $t_i$ is set to player $i$'s uniform strategy. Using this definition of $u_i^{\tau}(z_i, x_{-i})$, we can define a loss function as in~\citep{gemp2022sample} such that $\argmin_{\boldsymbol{x}} \gL^{\tau}(\boldsymbol{x})$ is a QRE at temperature $\tau$:
\begin{align}
    \gL^{\tau}(\boldsymbol{x}) &= \sum_i u_i^{\tau}(\br_i, x_{-i}) - u_i^{\tau}(x_i, x_{-i}) \label{eq:qre_loss}
\end{align}
where player $i$'s best response $\br_i = \softmax(\frac{1}{\tau} \nabla_{x_i} u_i + \log(t_i))$.
By annealing $\tau$ from a high value and successively re-solving for the global minimum of $\gL^{\tau}$, we can approximately trace the QRE continuum to the LLE. In Section~\ref{sec:selection}, we explore non-uniform $t_i$ to achieve clone-invariance.

\paragraph{Coarse Correlated Equilibrium (CCE)} Solving for a unique CCE is computationally easier than NE as the problem is convex (Equation~\eqref{eq:deviation_gain}). Therefore any strictly convex function can be used to uniquely select an equilibrium. For example, maximum entropy would be a suitable default criterion following the principle of maximum entropy. However, as we show in Section~\ref{sec:selection}, a different target formulation is necessary for clone-invariance. As such, we opt for maximum relative entropy to a target joint $t=\bigtimes_{i=1}^n t_i$ to allow for non-uniform target joint distributions. A number of off-the-shelf solvers \citep{domahidi2013_ecos} and frameworks \citep{diamond2016_cvxpy} can be used to compute solutions to this problem. We used a particularly efficient dual space gradient based algorithm described in Appendix~\ref{sec:mrecce} for scaling.

\subsection{Invariant Equilibrium Selection} \label{sec:selection}

There may be many NEs and CCEs
\citep{mclennan1999generic, sturmfels2002solving, mclennan2005expected}. Some equilibria exhibit sparse or heavily skewed strategy profiles (see examples in Appendix~\ref{app:fragile_equilibria}). Intuitively, these equilibria are {\em risky} in the sense of {\em risk dominance}: playing one such equilibrium when other players do not would be a costly mistake. Our goal is to propose a selection procedure that along with our equilibrium solving algorithms, approximates a clone-invariant equilibrium.

Shannon entropy plays a key role in several equilibrium selection approaches, however, its definition is vulnerable to redundancy in games. Consider a game with $2$ distinct actions $A$ and $B$ per player and introduce $b - 1$ clones of $B$ into player $1$'s action set. The maximum entropy strategy for player $1$ in the new game is uniform across their actions with mass $\frac{1}{1 + b}$ on each, but this induces a distribution that places $\frac{b}{1 + b}$ cumulative mass on the cloned action $B$.
From Section~\ref{sec:eq_solving}, the maximum Shannon entropy profile defines the precise starting point for tracing the path of QREs towards the LLE. This starting point is sensitive to clones. Hence, if we compute the LLE using the uniform distribution in this new game, we will effectively start from the $(A, B)$ mixed-strategy $(\frac{1}{1 + b}, \frac{b}{1 + b})$ rather than the desired mixed-strategy $(\sfrac{1}{2}, \sfrac{1}{2})$; hence, will not necessarily arrive at the LLE of the original game.

\textbf{Desired properties.} A clone-invariant entropy definition should be:
\begin{description}
\item[P1.] Real-valued, finite, and non-negative for any distribution $x$;
\item[P2.] Have a well-defined gradient for any $x$ in the interior of the simplex;
\item[P3.] Its maximizers should form a convex set. In the case of duplicate strategies (clones), the maximizers should form precisely the set of distributions which arbitrarily distribute a mass of $\frac{1}{c}$ across each of the $c$ sets of clones. In addition, they should achieve an entropy value which is equal to the entropy of the system with clones removed;
\item[P4.] Amenable to efficient estimation and flexible to re-interpretation of redundancy. \label{item:p4}
\end{description}

Note \textbf{P3.} resolves the issue with Shannon entropy that we highlighted above. \textbf{P1} is necessary for a reasonable measure of information content. \textbf{P2} is necessary for gradient-based optimization, and \textbf{P4} is practically helpful for efficient implementation and adaptation to bespoke game settings. We now introduce \emph{affinity} entropy $H_a^p: \Delta \rightarrow \mathbb{R}$, a generalized Tsallis entropy~\citep{tsallis1988possible} that recognises similar or redundant strategies. Its derivation from the above axioms can be found in Appendix~\ref{appx:affinity_entropy}.
\begin{definition}[Affinity Entropy $H_a^p$]
\begin{align}
    H^p_a(\vx) &= \frac{1}{p} \Big[ 1 - \mathbf{1}^\top (U^{(p)} \vx)^{p+1} \Big] %
\end{align}
with entropic-index parameter $p \in (0, 1]$, $U^{(p)} = K\Lambda_p^{-1}$, and $K$ a similarity kernel with entries in $[0, 1]$ with $1$ indicating two strategies are clones, and $\Lambda_p$ a diagonal matrix containing the $(p+1)$-norms of the columns of $K$ on its diagonal.
\end{definition}

\begin{theorem}
Affinity entropy $H_a^p$ satisfies all desiderata \textbf{P1-P4}.
\end{theorem}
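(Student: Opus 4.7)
The plan is to verify each of \textbf{P1--P4} in turn, leveraging three structural facts: (i) every column of $U^{(p)} = K\Lambda_p^{-1}$ has unit $(p{+}1)$-norm by construction of $\Lambda_p$; (ii) $t \mapsto t^{p+1}$ is convex on $[0,\infty)$ because $p{+}1 \ge 1$; and (iii) two clone strategies share the same column of $K$, and hence of $U^{(p)}$, so $U^{(p)}\vx$ depends on $\vx$ only through the cumulative per-clone-class mass vector $\vy$, where $y_k$ is the total mass $\vx$ places on clone class $k$.

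For \textbf{P1}, I combine (i) and (ii): Jensen's inequality with weights $\vx \in \Delta$ gives $(U^{(p)}\vx)_i^{p+1} \le \sum_j x_j (U^{(p)}_{ij})^{p+1}$, and summing over $i$ followed by swapping the sums yields $\mathbf{1}^\top (U^{(p)}\vx)^{p+1} \le \sum_j x_j \|U^{(p)}_{:,j}\|_{p+1}^{p+1} = 1$, so $H_a^p(\vx) \ge 0$; boundedness of $K$ gives finiteness and the upper bound $H_a^p \le 1/p$. For \textbf{P2}, $\vx \mapsto (U^{(p)}\vx)^{p+1}$ is a composition of a linear map with a coordinatewise smooth map (the argument is non-negative on $\Delta$ and $p{+}1 \ge 1$), and direct differentiation yields the well-defined expression $\nabla H_a^p(\vx) = -\frac{p+1}{p}(U^{(p)})^\top (U^{(p)}\vx)^p$ throughout the interior of the simplex.

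For \textbf{P3}, maximizing $H_a^p$ is equivalent to minimizing the convex objective $g(\vx) = \|U^{(p)}\vx\|_{p+1}^{p+1}$ on the convex set $\Delta$, so the minimizer set is automatically convex. For the clone-specific part, I use fact (iii) to rewrite $g(\vx) = \tilde g(\vy)$ for some $\tilde g$ on the reduced simplex; careful bookkeeping of how the column-norm factors in $\Lambda_p$ interact with the row multiplicities induced by cloning then identifies $\tilde g$ with the corresponding objective $g_0$ of the clone-removed system. The Jensen step from \textbf{P1}, now applied to the reduced system, becomes tight precisely at $\vy = \mathbf{1}_c / c$, so this profile minimizes $g$ and attains the entropy of the clone-removed system; arbitrary within-class redistribution preserves $\vy$, and hence the objective value. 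Finally, \textbf{P4} is immediate: evaluating $H_a^p$ and its gradient requires only a matrix-vector product with $U^{(p)}$, which is easily parallelized, and the kernel $K$ can be swapped for any similarity matrix encoding an alternative notion of redundancy.

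The hard part will be the second half of \textbf{P3}, i.e., showing that the column normalizations in $\Lambda_p$ and the duplicated rows induced by cloning conspire so that $g$ reduces exactly to the analogous quantity in the clone-removed system. This is the one place where the specific choice of the $(p{+}1)$-norm (rather than some other normalization of $K$) plays an essential role, and where I expect to lean most heavily on the derivation deferred to Appendix~\ref{appx:affinity_entropy}. The remaining desiderata follow from standard convex analysis together with the unit-norm-columns property of $U^{(p)}$.
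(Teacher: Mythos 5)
Your overall route matches the paper's closely: the paper also proceeds desideratum by desideratum, establishes concavity of $H_a^p$ from convexity of $t\mapsto t^{p+1}$ composed with the linear map $U^{(p)}$, computes the same gradient formula $-\frac{p+1}{p}(U^{(p)})^\top(U^{(p)}\vx)^p$ for \textbf{P2}, and treats \textbf{P4} at the same informal level. Two of your steps are genuinely different and arguably cleaner. For \textbf{P1}, the paper bounds $\sup_{\vx\in\Delta}\lVert U^{(p)}\vx\rVert_{p+1}$ by identifying it with the $(1,p{+}1)$ mixed operator norm and citing a known formula ($\max_j \lVert U^{(p)}_{\cdot,j}\rVert_{p+1}$); your Jensen argument with weights $\vx$ reaches the same conclusion self-containedly. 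For the clone part of \textbf{P3}, the paper computes $U^{(p)}_{ij}=n_{c(i)}^{-1/(p+1)}$ explicitly for block-diagonal $K$ and solves the first-order stationarity conditions on the simplex; your observation that $g(\vx)=\lVert U^{(p)}\vx\rVert_{p+1}^{p+1}$ factors through the per-class mass vector $\vy$ and reduces exactly to the clone-removed objective (the $(p{+}1)$-norm column normalization cancels the row multiplicity $n_c$, giving $\tilde g(\vy)=\sum_c y_c^{p+1}$) is correct and characterizes the full maximizer set as a preimage, which handles the boundary of the simplex more cleanly than the paper's interior stationarity argument.

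There is, however, one step that fails as written. You claim that ``the Jensen step from \textbf{P1}, now applied to the reduced system, becomes tight precisely at $\vy=\mathbf{1}_C/C$, so this profile minimizes $g$.'' The \textbf{P1} Jensen step is the upper bound $g(\vx)\le 1$; its tightness characterizes the \emph{maximizers} of $g$, i.e., the minimizers of entropy (in the reduced system it reads $\sum_c y_c^{p+1}\le\sum_c y_c=1$, tight exactly at the vertices). What you need for the maximizers of $H_a^p$ is the opposite inequality: by strict convexity of $t\mapsto t^{p+1}$ for $p\in(0,1]$, Jensen with \emph{uniform} weights gives $\frac{1}{C}\sum_c y_c^{p+1}\ge\big(\frac{1}{C}\sum_c y_c\big)^{p+1}=C^{-(p+1)}$, with equality iff all $y_c$ are equal. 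That yields $g\ge C^{-p}$ with the unique reduced minimizer $\vy=\mathbf{1}_C/C$, hence $H_a^p=\frac{1}{p}[1-C^{-p}]$, the Tsallis entropy of the clone-removed system, as required. With that substitution (or with the paper's stationarity computation) your argument for \textbf{P3} goes through; the rest of the proposal is sound.
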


In experiments, we define a similarity kernel $K^{(i)}$ for each player $i$ with entries $K^{(i)}_{\alpha \beta}$ with %
\begin{align}
    D^{(i)}_{\alpha \beta} &= \mathbb{E}_{\va \sim U(\gA)}[\big( u_i(\alpha, a_{-i}) - u_i(\beta, a_{-i}) \big)^2] \label{eqn:dissimilarity}
    \\ K^{(i)}_{\alpha \beta} &= \texttt{exp}(-D^{(i)}_{\alpha \beta} / (2 \sigma)^2) \label{eqn:similarity}
\end{align}
where $D$ measures the strategic \emph{dis}-similarity between player $i$'s strategies $\alpha$ and $\beta$ and $K$ is simply a radial basis function (RBF) kernel under the metric $D$. Note $D^{(i)}_{\alpha \beta}$ is zero iff two strategies $\alpha$ and $\beta$ achieve exactly the same utility for player $i$ irrespective of the actions chosen by other players in the game. It should also be clear from the definition how one might Monte-Carlo estimate $D$.
To select for an NE or a CCE, we set $t = \argmax H_a^{p=1}(x)$ in Equation~\eqref{eq:lle_strategy} and Equation~\eqref{eq:mrecce} respectively.

\section{Results} \label{sec:results}

We use the same hyper-parameters for equilibrium solving in all results (see Appendix~\ref{app:solver_params}). For evaluation on real-world prompts, we consider the \arenahard dataset with 500 prompts, selected to separate frontier LLMs, as well as responses from many candidate LLMs. We consider responses from 17 LLMs in particular and queried \geminipro for 8 pairwise preference ratings on each prompt for each model pair. See Appendix~\ref{app:arena_hard_data} for more details.

\begin{figure}
    \centering
    \includegraphics[width=\textwidth]{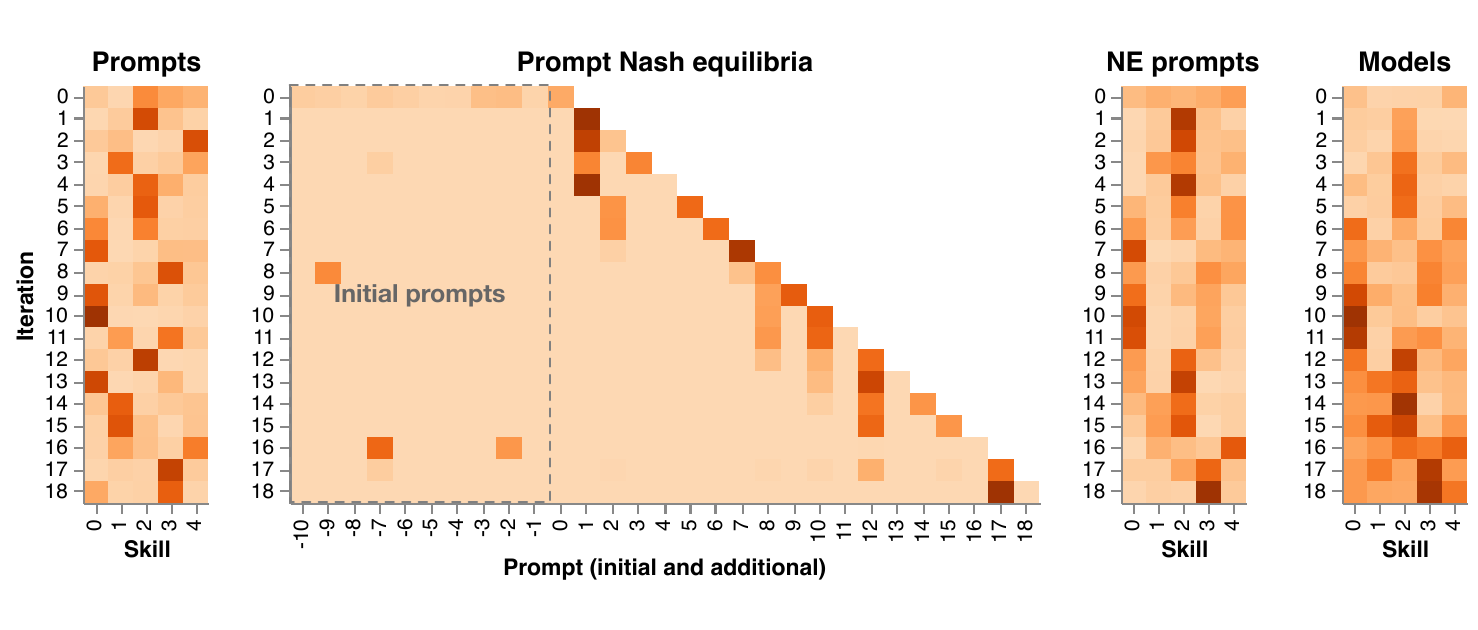}
    \caption{We inspect the model improvement path induced by NE ratings as shown in Figure~\ref{fig:elo_vs_equilibrium}~\figright. \figleft shows the sequence of additional prompts added at each iteration. Each prompt is the best-of-64 samples according to their NE ratings. \figcenter shows the sequence of prompt player NEs. Each row defines a distribution over prompts. \figright shows the equilibrium-weighted prompt skills and the sequence of king player models. Recall prompts and models are non-negative vectors over skills, darker indicates higher focus or capability in each skill.}
    \label{fig:equilibrium_prompts}
\end{figure}

\subsection{Equilibrium rating improvement path: a simulated example}

Recall from Figure~\ref{fig:elo_vs_equilibrium} that contrary to the Elo improvement path, maximizing equilibrium ratings led to models (and prompts) improving across skills. We inspect the equilibrium improvement path and offer our interpretation. 
Figure~\ref{fig:equilibrium_prompts}~\figright shows that the shifts in focus between skills by the model player coincides with transitions in the NE prompts, or prompts weighted by their NE strategies (shown in Figure~\ref{fig:equilibrium_prompts}~\figcenter). Similarly, to gain support under an NE, new prompts must highlight a skill dimension along which equilibrium models are better differentiated (Figure~\ref{fig:equilibrium_prompts}~\figleft). In sum, equilibrium prompts separate equilibrium models. This dynamic encourages exploration of new skill dimensions and incentivises models to be well-rounded across skills.

\subsection{Invariant Evaluation} \label{sec:exact_clone_invariance}

We now turn to \arenahard and show that candidate LLMs' equilibrium ratings are invariant to redundancies when their Elo ratings are not. In this experiment, we will introduce prompts targeted at bringing down the rating of a certain action (in this case, model). Specifically, let $\bar{\vu}_k(a_k) = \frac{1}{|\gA_m|} \sum_{a_r} u_k(\cdot, a_k, a_r)$ be the vector of expected king player payoffs when playing action $a_k$ against a randomly chosen rebel model on each prompt. We can then sample prompts adversarial  to $a_k$ from $\softmax(-\lambda \bar{\vu}_k(a_k))$ and add them to the prompt set. Figure~\ref{fig:exact_clone_invariance_with_shannon} reports the king model rankings under different methods with $a_k = \texttt{\geminipro}$ and $\lambda = 10$. 

Our first observation is that without redundant adversarial prompts, our proposed equilibrium rankings of LLMs are fairly consistent with their Elo rankings, with a few models moving up or down one or two positions. This deserves attention. Out of a multiplicity of equilibria, the NE and CCE we selected led to rankings that correspond to our intuition. Indeed, we show in Appendix~\ref{app:fragile_equilibria} that the NE we select is risk-dominant among 128 mixed-strategy NEs of this game.
Second, the Elo ratings can be arbitrarily influenced by redundancy, with the top-ranked model falling through the ranks. Equilibrium rankings remain invariant. In fact, while we lose the invariance guarantee with {\em near} redundant prompts, we show models' equilibrium rankings to degrade gracefully in Appendix~\ref{app:noisy_clone_invariance}.
Third, the CCE ratings show the top-3 models to tie for the first place: correlating models with prompts affects the competitive landscape which we inspect in Section~\ref{sec:interpreting}. 
Lastly, solving for a unique equilibrium is not sufficient for invariant ratings. We show in Figure~\ref{fig:exact_clone_invariance_with_shannon}~\figright that using Shannon's entropy for tracing the QRE continuum or for selecting a max-entropy CCE would not lead to invariant ratings. 
For completeness, we provide a detailed breakdown of our equilibrium ratings in terms of action ratings and marginals for each player in Appendix~\ref{app:noisy_clone_invariance}. %

\begin{figure}
    \centering
    \includegraphics[width=\textwidth]{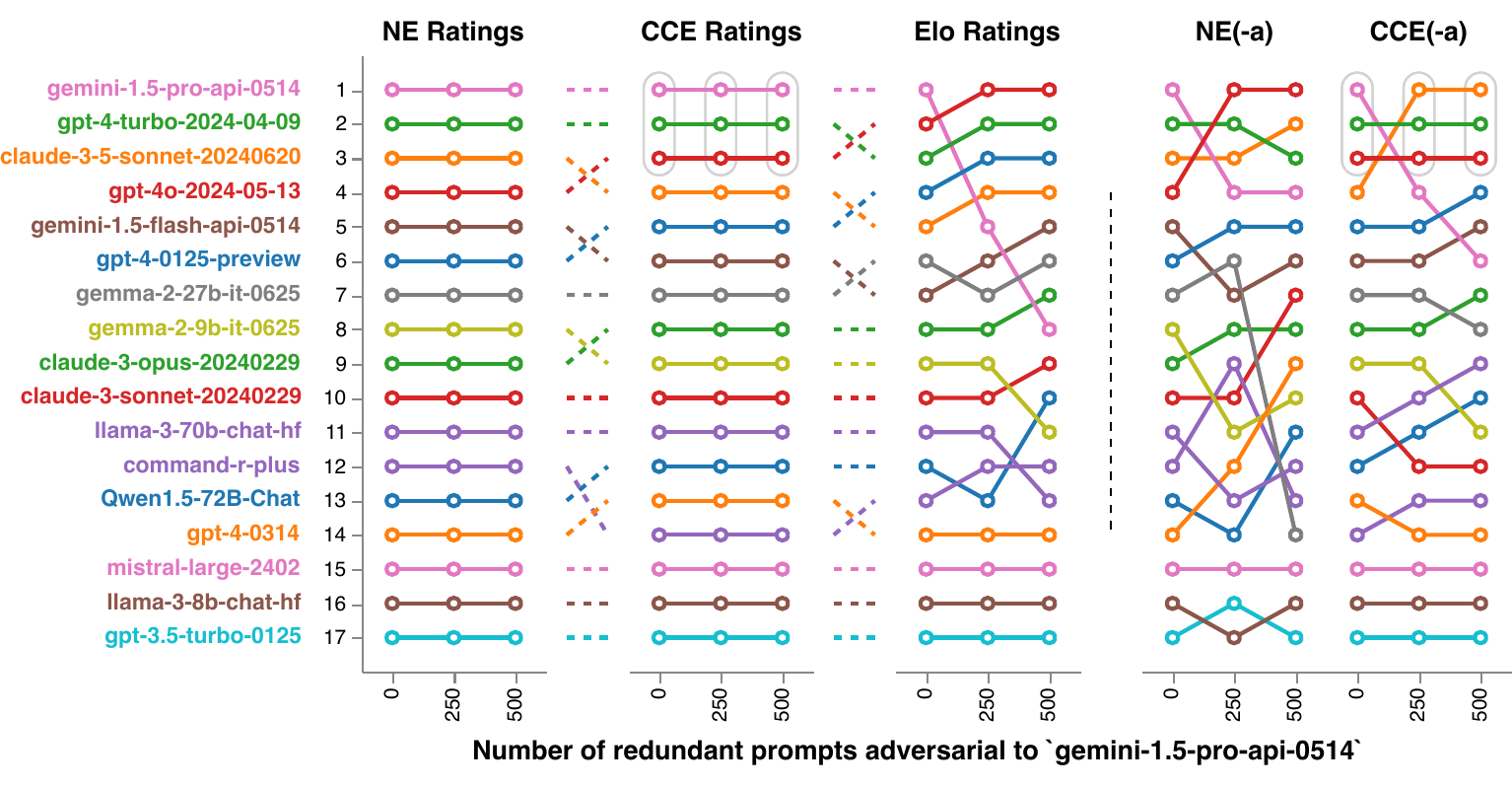}
    \caption{We introduce an increasing number of redundant copies of prompts adversarial to \geminipro and show model rankings under each method. Models at the same rank are grouped in grey and ordered alphabetically. \figright We show equilibrium rankings under NE(-a) and CCE(-a) selected using Shannon's entropy instead of the affinity entropy. Dotted lines connecting different rating panels indicate continuity in the labeling. For instance, \geminipro consistently ranks first under our NE and CCE ratings, despite the introduction of up to 500 redundant adversarial prompts. However, its ranking suffered significantly under the Elo ratings as soon as 250 adversarial prompts have been introduced.}
    \label{fig:exact_clone_invariance_with_shannon}
\end{figure}

\subsection{Interpreting Equilibrium Solutions}
\label{sec:interpreting}

Besides rankings, the equilibrium solutions can surface interpretable insights. We share two examples using NE and CCE solutions respectively from ratings shown in Section~\ref{sec:exact_clone_invariance}.

\paragraph{Nash Equilibrium Prompts} We have shown that equilibrium ratings are intuitive and invariant to redundancy. A follow-up question is which actions are highly-rated and which actions affect other players' ratings (i.e., with positive support at the NE). 

Recall that the prompt player utility $u_p(a_p, a_k, a_r) = |u_k(a_p, a_k, a_r)|$ reflects the extent to which a prompt separates the pair of responses from models $a_k$ and $a_r$. The prompt player's equilibrium rating is then $\text{regret}(a_p, \vx) = \E_{\substack{a_k \sim x_k \\ a_r \sim x_r}} u_p(a_p, a_k, a_r)$ with $x_k$, $x_r$ the NE strategies of the king and rebel player respectively. By definition, prompts that are highly rated under NE ratings separate models played at the NE. In other words, while the Elo ratings reflect the strength of an action on average, equilibrium ratings reflect the strength of actions at the selected equilibrium. 

We can now illustrate these phenomena using the same game investigated in the second columns of Figure~\ref{fig:exact_clone_invariance_with_shannon}, with 250 redundant prompts added to the game. First, we show in Figure~\ref{fig:ne_ratings_and_support_samples}~\figtop the {\em king-vs-rebel} payoff matrices induced by 6 sample prompts, with increasing equilibrium prompt ratings. Prompts with low ratings tend to fail to differentiate performant models (i.e. top-left block of each heatmap). Second, we can ask which prompts should we expect to have support at an equilibrium. Figure~\ref{fig:ne_ratings_and_support_samples}~\figbottom shows that empirically, highly rated prompts are played more often at the equilibrium we select. This implies that the model ratings are heavily influenced by a small subset of prompts that separate frontier models. We note that this correlation is not guaranteed, following our discussion in Section~\ref{sec:exact_clone_invariance} on redundant actions. Indeed, our final observation is that prompts that are clones with other prompts tend to receive lower probability mass than their ratings would have required. In fact, since we have introduced 250 redundant prompts explicitly, we can highlight in gray prompts that are indeed redundant --- many of these prompts enjoy high ratings, but significantly lower mass. In other words, equilibrium ratings reflect quality of an action in isolation while equilibrium mass further takes into account redundancy of an action with respect to other actions. This observation is even clearer in games studied in Appendix~\ref{app:toy_games}-\ref{app:vulnerability}.

\begin{figure}
    \centering
    \includegraphics[width=\textwidth]{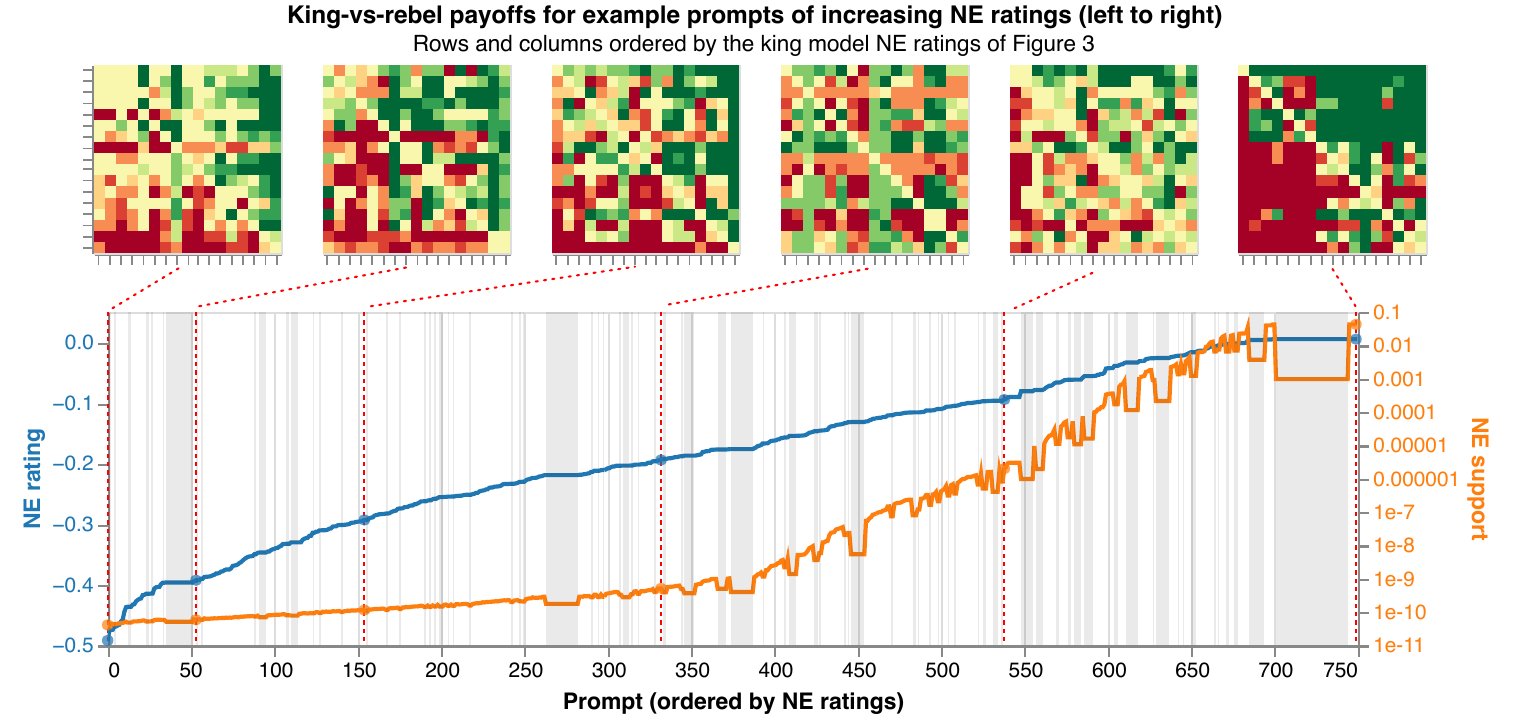}
    \caption{
    Highly rated prompts generally have high support under the NE. Redundant prompts (gray bands) receive identical ratings but notably lower support. In sum, equilibrium ratings reflect separability of each prompt with respect to the model equilibrium strategies in isolation, whereas equilibrium support of each prompt further accounts for its redundancy with respect to other prompts. \figtop We show the {\em king-vs-rebel} payoffs induced by example prompts. Green indicates king-player winning and red losing. Highly rated prompts tend to discriminate between strong models (top-left corners). \figbottom We show the NE supports and ratings of all prompts, ordered by their NE ratings.}
    \label{fig:ne_ratings_and_support_samples}
\end{figure}

\paragraph{Marginal rating contribution by co-player action}
With ratings derived from underlying equilibria, we can decompose the rating of each action into a sum of marginal contributions from each co-player's actions. Recall from Equation~\eqref{eq:deviation_gain} that the rating of an action $a'_i$ is its $\text{regret}_i(a'_i, \vx) = \sum_a x(a) \left[ u_i(a'_i, a_{-i}) - u_i(a) \right]$. We can decompose the rating of player $i$'s action $a'_i$ into a weighted sum of each of player $j$'s contributions, with $\delta(a'_i, a_j, \vx) = \sum_{a_{-j}} x(a) \left[u_i(a'_i, a_j, a_{-i,-j}) - u_i(a_j, a_{-j})  \right]$ the marginal contribution of $a_j$ to $a'_i$'s equilibrium rating. Note that $\text{regret}(a'_i, \vx) = \sum_{a_j} \delta(a'_i, a_j, \vx)$. 
The marginal contribution $\delta(a'_i, a_j, \vx)$ therefore explains $a_j$'s contribution to player $i$'s decision to not deviate.

Recall from Figure~\ref{fig:exact_clone_invariance_with_shannon} where several models tied for the first place under the CCE profile but are fully differentiated under NE. We can now leverage the marginal contribution analysis to understand the mechanism underlying this phenomenon. Figure~\ref{fig:cce_rating_contribution} shows the CCE king model ratings decomposed from the perspective of the rebel player. In other words, we ask which rebel models contribute most positively or negatively to each king model's CCE rating. For clarity of presentation, we focus on the top 5 models and we group rebel models into families of models if they share the same naming prefix. The contribution of each family of model is therefore the sum of the contribution by models within each family $\gF$ or $\sum_{a_r \in \gF} \delta(a'_k, a_r, \vx)$ with $a'_k$ a king model and $a_r$ a rebel model.

We make several remarks. First, all 3 top-ranked king models benefit the most when compared against rebel models in their own model family: the GPT family \citep{achiam2023gpt} of models contribute positively to the ratings of \gpto and \gptt. Similarly, \geminiflash, the only other model in the Gemini family \citep{team2023gemini}, improves \geminipro's rating the most. We speculate that this can be a result of model developers selecting models to release based on favourable comparisons to their earlier or smaller models. Second, all top-ranked models remain vulnerable to open-weight models such as the Mistral \citep{jiang2023mistral} and Llama \citep{dubey2024llama3herdmodels} families of models. More fine-grained analysis may shed light on the prompts on which these losses tend to occur. 

\begin{figure}
    \centering
    \includegraphics[width=\textwidth]{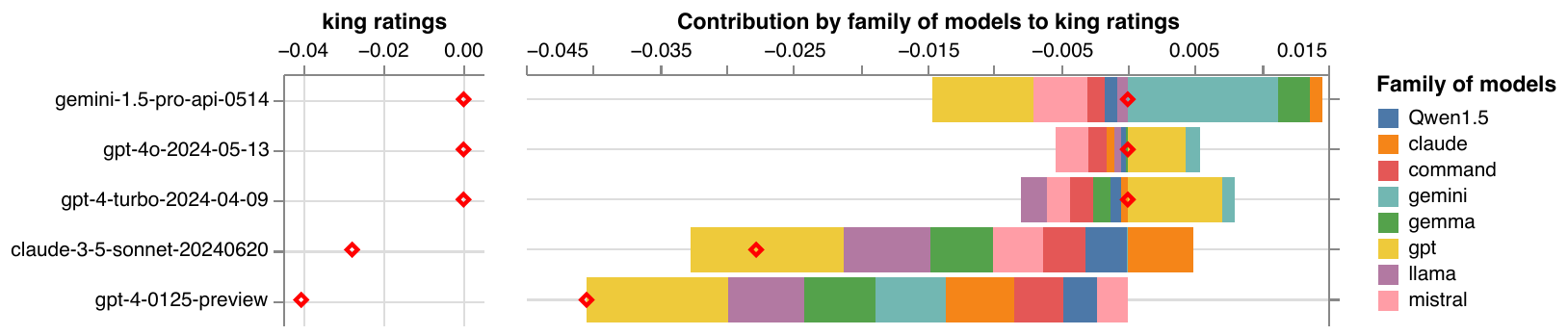}
    \caption{
    The CCE joint distribution can surface insights in the comparison data. Each bar represents a model family $\gF$ and its width corresponds to $\sum_{a_r \in \gF} \delta(a'_k, a_r, \vx)$ with $a'_k$ a king player model choice and $a_r$ a rebel model belonging to the family $\gF$. A model's family is determined by its model name prefix. For brevity, we show the king model rating breakdown for the top 5 models.
    }
    \label{fig:cce_rating_contribution}
\end{figure}

We caveat that our results are in part derived from the preference ratings of a \geminipro model and may not reflect the true dynamics of real-world LLM development. Nevertheless, the interpretability offered by the game-theoretic equilibria further distinguishes game-theoretic evaluation from prior works to be discussed in the Section~\ref{sec:related}.

\section{Related Works} \label{sec:related}

There is a rich body of literature studying rating methods with applications in Chess, Go, Tennis and video games. One family of probabilistic methods follows the Bradley-Terry model and predicts pairwise win probabilities from ratings. A widely used example is Elo \citep{elo1978rating} with extensions Bayes-Elo, mElo and Elo-MMR \citep{coulom2008whole, balduzzi2018re, ebtekar2021elo, vadori2024ordinal} capturing temporal variation, cyclicality and ordinal ranks in data. Elo ratings can typically be efficiently solved as regression problems, although their ratings are vulnerable to redundancy. A separate line of work draws from Social Choice (or \emph{Voting}) Theory (SCT, \citet{sen1977social, lanctot2023evaluating}),
which also studies {\em independence of clones}: rankings should be invariant to redundant candidates (e.g., LLM models) being added.
However, invariance to redundancy in votes (e.g., prompts) is in direct opposition to the spirit of social choice theory. In this sense, SCT provides partial (one-sided) clone invariance, which we argue is insufficient for open-ended, LMSYS-style evaluation. Finally, game-theoretic evaluation has been previously studied in \citet{balduzzi2018re} and \citet{marris2022_game_theoretic_rating} where full clone-invariance is guaranteed in the \tpzs setting. Our method generalises the approaches in these works to \npgs settings, with practical equilibrium solving and selection algorithms based on our novel affinity entropy definition. Other approaches have been concurrently developed that avoid the equilibrium selection dilemma, and hence obviate the use of entropy~\citep{marris2025deviationratings}.

\section{Conclusions} \label{sec:conclusion}

We studied the effect of maximizing Elo ratings in the context of open-ended evaluation and showed that its sensitivity to redundancy could bias model (and prompt) selection. We then proposed an equilibrium rating framework, with practical equilibrium solving and selection algorithms that can scale to real-world LLM evaluation. We show our method to provide intuitive and robust rankings of models (and prompts), with interpretable structures.

We see several exciting future directions. First, although our methods can scale to tens of thousands of prompts and tens of models on commodity hardware, scaling further would be challenging. Tabulating the evaluation payoff tensor with pairwise preference ratings can be costly too. 
Research into alternative solution concepts, or how we could leverage their equilibrium structure for analysis (e.g. prompt and model pruning) is also promising. Finally, while we target LLM evaluation in particular, our methodology can be applied more generally to other domains. For instance, our rating methods could evaluate multi-modal model generation capabilities \citep{jiang2024genaiarenaopenevaluation} or analysing game dynamics for video game development \citep{pendurkar2023bilevel}.

\bibliography{iclr2025_conference}
\bibliographystyle{iclr2025_conference}

\newpage
\appendix

\section{Computing the Maximum Relative Entropy CCE}
\label{sec:mrecce}

A maximum relative entropy CCE, that minimises the distance of the log-joint, $\log(x(a))$, to a target log-joint, $t(a) \in \mathbb{R}^{|\gA|}$, can be computed using gradient descent. We formulate the problem in dual space \citep{marris2022_turbo_arxiv} with dual parameters, $\alpha_i(a'_i) \in \mathbb{R}_+^{|\gA_i|} ~ \forall i$, defined as functions, $\alpha_i(a'_i) = \softplus(\theta_i(a'_i)) ~ \forall i$, of learned parameters, $\theta(a'_i) \in \mathbb{R}^{|\gA_i|} ~ \forall i$. Let $l_\theta(a)$ be a logit term used to construct the loss function.
\begin{align}
    l_\theta(a) = - \sum_i \sum_{a'_i} \alpha_p(a'_i) \left[ u_i(a'_i, a_{-i}) - u_i(a) \right] + t(a)
    \label{eq:mrecce}
\end{align}
Minimizing a loss function, $\min_\theta L_\theta$,  converges to optimal dual variables, $\alpha^*_i(a'_i) = \softplus(\theta^*_i(a'_i)) ~ \forall i$ with $ L_\theta = \log \left[ \sum_a \exp \left[ l_\theta(a) \right] \right]$.
The loss is convex, deterministic, and unconstrained. Therefore many optimization algorithms are suitable. The primal joint can be simply recovered from the optimal logit term $x_\theta(a) = \softmax\left[ l_{\theta^*}(a) \right]$.

\color{black}

\section{Affinity Entropy}\label{appx:affinity_entropy}

Consider defining a modified Tsallis entropy $H^p_a$ with temperature parameter $p \in (0, 1]$ as:
\begin{align}
    H^p_a(\vx) &= \frac{1}{p} \Big[ 1 - \vz^\top \vz \Big] = \frac{1}{p} \Big[ 1 - \sum_i (U^{(p)}_i \vx)^{p+1} \Big] \label{eqn:modtsallis}
\end{align}
where $\vz = (U^{(p)} \vx)^{\frac{p+1}{2}}$.
Note that this definition recovers the standard definition of Tsallis entropy when $U^{(p)}$ is the identity matrix.

\begin{remark}
$U^{(p)}_{ij} \ge 0$ for all entries for $H^p_a$ to be real-valued.
\end{remark}
$U^{(p)}_{ij}$ must be non-negative for every $i, j$, otherwise, there exists $\vx = \ve_j$ where $\ve_j$ is a standard-basis vector such that $U^{(p)}_i \vx < 0$ and $(U^{(p)}_i \vx)^{p+1}$ is not real for $p \in (0,1)$.

\begin{remark}
The $(p+1)$-norm of each column of $U^{(p)}$ must be less than or equal to $1$ for $H^p_a$ to be non-negative for any $\vx \in \Delta$.
\end{remark}

We need $\vz^\top \vz \le 1$ for $p \in (0,1]$ and any $\vx \in \Delta$. Equivalently, we require $(\vz^\top \vz)^{\frac{1}{p+1}} \le 1$ for $p \in (0,1]$.

Note $(\vz^\top \vz)^{\frac{1}{p+1}} = \big( \sum_i (U^{(p)}_i \vx)^{p+1} \big)^{\frac{1}{p+1}} = ||U^{(p)} \vx||_{p+1}$. Therefore, we require
\begin{align}
    1 &\ge \sup_{\vx \in \Delta} ||U^{(p)} \vx||_{p+1}
    \\ &= \sup_{||\vx||_1 = 1} ||U^{(p)} \vx||_{p+1} \quad \text{for} \quad U^{(p)} \ge 0
    \\ &= ||U^{(p)}||_{1,p+1}
    \\ &= \max_{j=1} ||U^{(p)}_{\cdot,j}||_{p+1} \quad \text{by~\citet{drakakis2009calculation}}.
\end{align}

\begin{remark}
Among all admissible $U^{(p)}$, defining $U^{(p)}$ such that its columns have exactly unit $(p+1)$-norm achieves $\min_{U^{(p)}} \min_{\vx \in \Delta} H^a_p(\vx)$.
\end{remark}

This follows from the previous remark and is desireable for the sake of defining a ``tight'' definition of entropy. Intuitively, by the conditions set thus far, $U^{(p)} = \mathbf{0}$ is admissible. Yet, this gives a loose definition of entropy where $H^p_a = \sfrac{1}{p}$. It turns out that this intuition is required in the limit as $p \rightarrow 0$.

\begin{remark}
$U^{(p)}$ must be precisely column stochastic for $H^p_a$ to remain finite in the limit of $p \rightarrow 0$.
\end{remark}
In the limit $p \rightarrow 0$, the denominator of $H^p_a$ goes to zero, therefore, by L'Hôpital's rule, the numerator must as well. The numerator goes to $z^\top z = \sum_i U^{(p)}_i x = \mathbf{1}^\top U^{(p)} x$. Therefore,
\begin{align}
    \forall x \in \Delta^{d-1} \quad 1 - \mathbf{1}^\top U^{(p)} x  &= 0.
\end{align}
Finite distributions only obey a single equality constraint, that is $x^\top \mathbf{1} = 1$, therefore it must be the case that $\mathbf{1}^\top U^{(p)} = \mathbf{1}^\top$, i.e., $U^{(p)}$ is column stochastic.

\begin{remark}
$H^p_a$ is concave in $\vx$.
\end{remark}
Let $y_i = U^{(p)}_i \vx$. Then each element of the sum, $y_i^{p+1}$ is a convex function in $y_i$, which itself is a linear transformation on $\vx$. Therefore, $\sum_i (U^{(p)}_i \vx)^{p+1}$ is convex in $\vx$. Hence $H^p_a$ is concave in $\vx$.

\begin{remark}
The gradients $\nabla_{\vx} H^p_a$ are well-defined.
\end{remark}

Recall~\eqref{eqn:modtsallis}, then:
\begin{align}
    \frac{\partial H^p_a}{\partial x_j} &= -\frac{p+1}{p} \sum_i (U^{(p)}_i \vx)^p U^{(p)}_{ij}
    \\ \nabla_{\vx} H^p_a &= -\frac{p+1}{p} (U^{(p)})^\top (U^{(p)} \vx)^p
\end{align}
which is well-defined for any choice of $U^{(p)}_{ij} \ge 0$ for all $i, j$.

\begin{remark}
$H^p_a$ is well-defined in the limit as $p \rightarrow 0$, i.e., Shannon \emph{affinity} entropy is well-defined.
\end{remark}

It is known that Shannon entropy can be recovered from Tsallis entropy in the limit as $p \rightarrow 0$. We repeat that derivation here and use L'Hôpital's rule. The derivative of the denominator is $1$, hence we find the limit is given by the finite derivative of the numerator:
\begin{align}
    \frac{d [pH^p_a]}{dp} &= -\frac{d}{dp} \Big[ \sum_i y_i^{p+1} \Big]
    \\ &= -\frac{d}{dp} \Big[ \sum_i e^{(p+1)\log(y_i)} \Big]
    \\ &= -\sum_i \big(\log(y_i) + (p+1) \frac{1}{y_i} \frac{d y_i}{dp} \big) e^{(p+1)\log(y_i)}.
\end{align}

In the limit $p \rightarrow 0$, the derivative evaluates to
\begin{align}
    \frac{d [pH^p_a]}{dp} &= -\sum_i \Big[ e^{(p+1)\log(y_i)} \log(y_i) \Big]\Big\vert_{p=0} - (p+1) \sum_i \Big[ \frac{1}{y_i} \frac{d y_i}{dp} e^{(p+1)\log(y_i)} \Big]\Big\vert_{p=0}
    \\ &= -\sum_i y_i \log(y_i) - \sum_i \frac{d y_i}{dp}\Big\vert_{p=0}
    \\ &= S(y) - \sum_i \frac{d y_i}{dp}\Big\vert_{p=0}.
\end{align}

\begin{remark}
Let $K$ be a similarity matrix between actions with non-negative entries with positive column-sums. Then $U^{(p)} = K \texttt{diag}\big( 1 / (\mathbf{1}^\top K^{p+1})^{1/(p+1)} \big)$ satisfies the conditions stated above for $U^{(p)}$.
\end{remark}

\begin{remark}
Under the above choice of $U^{(p)}$, Shannon \emph{affinity} entropy $S_a = H^{p\rightarrow0}_a$ can be derived as:
\begin{align}
    S_a(\vx) &= S(U^{(0)}\vx) - \sum_j \Big[ \log(\sum_{i} K_{ij}) - \sum_{i} U^{(0)}_{ij} \log(K_{ij}) \Big] x_j.
\end{align}
\end{remark}

The necessary $y_i$ term can be rewritten and its derivative (evaluated at $p=0$) can be derived as follows:

\begin{align}
    y_i &= U^{(p)}_i \vx = \sum_j \frac{K_{ij}}{(\sum_{i'} K_{i'j}^{p+1})^{\frac{1}{p+1}}} x_j
    \\ &= \sum_j K_{ij} x_j (\sum_{i'} K_{i'j}^{p+1})^{-\frac{1}{p+1}}
    \\ &= \sum_j K_{ij} x_j e^{-\frac{1}{p+1} \log(\sum_{i'} K_{i'j}^{p+1})}
    \\ &= \sum_j K_{ij} x_j e^{-\frac{1}{p+1} \log\big(\sum_{i'} e^{(p+1) \log(K_{i'j})}\big)}
    \\ \frac{d y_i}{dp} &= \sum_j K_{ij} x_j e^{-\frac{1}{p+1} \log\big(\sum_{i'} e^{(p+1) \log(K_{i'j})}\big)} \Big[ \frac{1}{(p+1)^2} \log\big(\sum_{i'} e^{(p+1) \log(K_{i'j})}\big)
    \\ &- \frac{1}{p+1} \frac{1}{\sum_{i'} e^{(p+1) \log(K_{i'j})}} \sum_{i'} \log(K_{i'j}) e^{(p+1) \log(K_{i'j})} \Big]
    \\ &= \sum_j K_{ij} x_j (\sum_{i'} K_{i'j}^{p+1})^{-\frac{1}{p+1}} \Big[ \frac{1}{(p+1)^2} \log(\sum_{i'} K_{i'j}^{p+1})
    \\ &- \frac{1}{p+1} \frac{1}{\sum_{i'} K_{i'j}^{p+1}} \sum_{i'} \log(K_{i'j}) K_{i'j}^{p+1} \Big]
    \\ &= \sum_j \Big[ \frac{1}{(p+1)^2} \log(\sum_{i'} K_{i'j}^{p+1}) - \frac{1}{p+1} \sum_{i'} (U^{(p)}_{i'j})^{p+1} \log(K_{i'j}) \Big] U^{(p)}_{ij} x_j
    \\ \frac{d y_i}{dp}\Big\vert_{p=0} &= \sum_j \Big[ \log(\sum_{i'} K_{i'j}) - \sum_{i'} U^{(0)}_{i'j} \log(K_{i'j}) \Big] U^{(0)}_{ij} x_j
\end{align}
where we define $K_{ij} \log(K_{ij}) = 0$ if $K_{ij} = 0$ (which implies $(U^{(p)}_{ij})^{p+1} \log(K_{ij}) = 0$ if $K_{ij} = 0$.

Plugging this back into the second term in the formula for Shannon \emph{affinity} entropy, we find

\begin{align}
    \sum_i \frac{d y_i}{dp}\Big\vert_{p=0} &= \sum_i \sum_j \Big[ \log(\sum_{i'} K_{i'j}) - \sum_{i'} U^{(0)}_{i'j} \log(K_{i'j}) \Big] U^{(0)}_{ij} x_j
    \\ &= \sum_j \Big[ \log(\sum_{i'} K_{i'j}) - \sum_{i'} U^{(0)}_{i'j} \log(K_{i'j}) \Big] x_j \sum_i U^{(0)}_{ij}
    \\ &= \sum_j \Big[ \log(\sum_{i'} K_{i'j}) - \sum_{i'} U^{(0)}_{i'j} \log(K_{i'j}) \Big] x_j
\end{align}
completing the claim.

\begin{remark}
In the case of duplicate strategies (clones), the maximizers of $H^p_a$ form precisely the set of distributions which arbitrarily distribute a mass of $\frac{1}{C}$ across each of the $C$ sets of clones.
\end{remark}

Consider the case of exact clones, i.e., $K$ is block diagonal (w.l.o.g.) with blocks of ones. Let there be $C$ clone groups each of size $n_c$ for $c \in \{1, \ldots, C\}$. Let $c(i)$ map an action $i$ to its clone set. In this case, it can be shown that $U^{(p)}_{ij} = n_{c(i)}^{-\frac{1}{p+1}}$ if $c(i)=c(j)$, otherwise $U^{(p)}_{ij} = 0$. Note that the gradient of entropy w.r.t. $\boldsymbol{x}$ must be proportional to the ones vector for $\boldsymbol{x}$ to be a maximizer in the interior of the simplex. Let $\boldsymbol{x} = [\frac{1}{C} \boldsymbol{x}_{1}, \ldots, \frac{1}{C} \boldsymbol{x}_{C}]$ with each $\boldsymbol{x}_{c} \in \mathbb{R}^{n_c}$ w.l.o.g. We will show that the set of maximizers of $H^p_a$ is necessarily the set of $\boldsymbol{x}$ where each $\boldsymbol{x}_{c} \in \Delta^{n_c - 1}$. For $\boldsymbol{x}$ to be a maximizer, the gradient must be equal to the ones vector multiplied by a scalar $-d \in \mathbb{R}$:

\begin{align}
    \forall j \,\, \frac{\partial H^p_a(\boldsymbol{x})}{\partial x_j} &= -\frac{p+1}{p} \sum_i (U^{(p)}_i \boldsymbol{x})^p U^{(p)}_{ij}
    \\ &= -\frac{p+1}{p} \sum_i (\sum_k U^{(p)}_{ik} x_k)^p U^{(p)}_{ij}
    \\ &= -\frac{p+1}{p} \sum_i \big( \frac{1}{C} n_{c(i)}^{-\frac{1}{p+1}} \mathbf{1}^\top \boldsymbol{x}_{c(i)} \big)^p U^{(p)}_{ij}
    \\ &= -\frac{p+1}{p} n_{c(j)} \big( \frac{1}{C} n_{c(j)}^{-\frac{1}{p+1}} \mathbf{1}^\top \boldsymbol{x}_{c(j)} \big)^p n_{c(j)}^{-\frac{1}{p+1}}
    \\ &= -\frac{p+1}{p} n_{c(j)} n_{c(j)}^{-\frac{p+1}{p+1}} \big(\frac{1}{C} \mathbf{1}^\top \boldsymbol{x}_{c(j)} \big)^p
    \\ &= -\frac{p+1}{p} \big(\frac{1}{C} \mathbf{1}^\top \boldsymbol{x}_{c(j)} \big)^p = -d. \label{eqn:maxentgrad}
\end{align}

We also require $\boldsymbol{x} \in \Delta$, which implies
\begin{align}
    x_j &\ge 0 \implies x_{c(j)} \ge \mathbf{0}
    \\ 1 &= \sum_j x_j = \sum_c \frac{1}{C} \mathbf{1}_{n_c}^\top \boldsymbol{x}_{c}
    \\ &= C\Big(\frac{dp}{p+1}\Big)^{1/p} = d^{1/p}C\Big(\frac{p}{p+1}\Big)^{1/p} \implies d = C^{-p} \Big(\frac{p+1}{p}\Big).
\end{align}

Finally, we know from~\eqref{eqn:maxentgrad}
\begin{align}
    (\frac{1}{C} \mathbf{1}^\top \boldsymbol{x}_{c(j)})^p &= \frac{dp}{p+1} = C^{-p}
    \\ \implies \mathbf{1}^\top \boldsymbol{x}_{c(j)} &= 1
\end{align}
proving the claim.

\begin{remark}
In the case of duplicate strategies (clones), the maximizers of $H^p_a$ achieve an entropy value which is equal to the Tsallis entropy of the system with clones removed.
\end{remark}

If we evaluate the max entropy distribution we find
\begin{align}
    H^p_a(\boldsymbol{x}) &= \frac{1}{p} \Big[ 1 - \sum_i (U^{(p)}_i \boldsymbol{x})^{p+1} \Big]
    \\ &= \frac{1}{p} \Big[ 1 - \sum_i \big( \frac{1}{C} n_{c(i)}^{-\frac{1}{p+1}} \mathbf{1}^\top \boldsymbol{x}_{c(i)} \big)^{p+1} \Big]
    \\ &= \frac{1}{p} \Big[ 1 - \sum_c n_c \big( \frac{1}{C} n_{c}^{-\frac{1}{p+1}} \mathbf{1}^\top \boldsymbol{x}_{c} \big)^{p+1} \Big]
    \\ &= \frac{1}{p} \Big[ 1 - \sum_c n_c \big( \frac{1}{C} n_{c}^{-\frac{1}{p+1}} \big)^{p+1} \Big]
    \\ &= \frac{1}{p} \Big[ 1 - \sum_c n_c n_c^{-1} \big( \frac{1}{C} \big)^{p+1} \Big]
    \\ &= \frac{1}{p} \Big[ 1 - \sum_c \big( \frac{1}{C} \big)^{p+1} \Big]
\end{align}
which is precisely the Tsallis entropy of the uniform distribution over $C$ distinct clones.

\color{black}

\section{Integrals over Simplex}

It is possible to derive a closed-form result for the dis-similarity kernel in~\eqref{eqn:dissimilarity} by appealing to known results of integrals of polynomial functions over the simplex.

Let $T^d = \{(x_1, \ldots, x_d): x_i \ge 0, \sum_{i=1}^d x_i \le 1\}$ be the standard simplex in $\mathbb{R}^d$. Let $\nu_i > 0$ for all $i$, then
\begin{align}
    \int_{T^d} x_1^{\nu_1 - 1} \ldots x_d^{\nu_d - 1} (1 - x_1 - \ldots - x_d)^{\nu_0 - 1} = \frac{\prod_{i=0}^d \Gamma(\nu_i)}{\Gamma(\sum_{i=0}^d \nu_i)}. \label{eqn:polyint}
\end{align}

\begin{proposition}
From player $i$'s perspective, the expected dis-similarity between two actions $p$ and $q$ under a uniform distribution over all opponent joint strategy profiles $x_{-i}$ is equal to
\begin{align}
    D^{(i)}_{pq} &= \frac{1}{(d_i+1)(d_i+2)} \Big[ ||U^{(i)}_p - U^{(i)}_{q}||^2 + \big(1^\top (U^{(i)}_p - U^{(i)}_{q})\big)^2 \Big]
\end{align}
where $U^{(i)}$ is a $\vert \mathcal{A}_i \vert \times \vert \mathcal{A}_{-i} \vert$ matrix where each entry $U^{(i)}_{a_i,a_{-i}}$ is the expected utility for player $i$ playing action $a_i$ against the background joint action $a_{-i}$. $U^{(i)}_{a_i}$ indicates an entire row of the matrix. The integer $d_i = \prod_{j \ne i} \vert \mathcal{A}_j \vert$.
\end{proposition}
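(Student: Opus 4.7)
My plan is to reduce the claim to computing second moments of the uniform distribution on the simplex of opponent joint strategies, and then to evaluate those moments via the polynomial integral formula~\eqref{eqn:polyint} stated just above.

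First, I would vectorize the utilities. Since $u_i(p, x_{-i}) = \sum_{a_{-i}} x_{-i}(a_{-i}) u_i(p, a_{-i}) = U^{(i)}_p \cdot x_{-i}$ and similarly for $q$, setting the contrast vector $v := U^{(i)}_p - U^{(i)}_q$ turns the squared difference of expected utilities into a quadratic form in $x_{-i}$:
\begin{equation*}
    D^{(i)}_{pq} = \mathbb{E}_{x_{-i} \sim U(\Delta)}\bigl[(v^\top x_{-i})^2\bigr] = v^\top M v, \qquad M := \mathbb{E}\bigl[x_{-i} x_{-i}^\top\bigr].
\end{equation*}
The task therefore reduces to evaluating the second-moment matrix $M$ of the uniform distribution over the simplex of opponent joint strategies.

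Second, I would compute $M$ entrywise by direct application of~\eqref{eqn:polyint}. Diagonal entries $\mathbb{E}[x_{-i,j}^2]$ correspond to choosing $\nu_j = 3$ and all other $\nu_\ell = 1$; off-diagonal entries $\mathbb{E}[x_{-i,j} x_{-i,k}]$ for $j \ne k$ correspond to $\nu_j = \nu_k = 2$ and all other $\nu_\ell = 1$. Dividing by the simplex volume (itself an instance of~\eqref{eqn:polyint} with all $\nu = 1$), the diagonal entries come out to exactly twice the off-diagonal ones, because $\Gamma(3) = 2\,\Gamma(2)^2$. Hence $M = c\,(I + \mathbf{1}\mathbf{1}^\top)$ for a common scalar $c$ equal to a ratio of Gamma functions in the simplex dimension.

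Plugging this structure into the quadratic form and collecting terms then immediately yields $v^\top M v = c\bigl[\|v\|^2 + (\mathbf{1}^\top v)^2\bigr]$, which matches the bracketed factor in the claim. The only care needed in the proof is in the dimension bookkeeping: which parameterization of $T^{d}$ (with $d$ free coordinates plus a slack coordinate) is used when invoking~\eqref{eqn:polyint} determines whether the prefactor lands on the stated $\frac{1}{(d_i+1)(d_i+2)}$, and this is also where I expect the one subtle step of the proof to live. I do not anticipate any conceptual obstacle beyond this bookkeeping; the whole argument is essentially a single application of~\eqref{eqn:polyint} followed by a quadratic-form identity.
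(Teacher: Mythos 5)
Your proposal is correct and follows essentially the same route as the paper's proof: both reduce the claim to the second moments $\mathbb{E}[x_w x_y]$ of the uniform distribution on the simplex, evaluate them via the polynomial integral formula~\eqref{eqn:polyint} (diagonal entries twice the off-diagonal ones), and collect terms into $\|v\|^2 + (\mathbf{1}^\top v)^2$ --- the paper just carries this out with explicit index sums rather than your $v^\top M v$ with $M \propto I + \mathbf{1}\mathbf{1}^\top$ packaging. Your flagged ``dimension bookkeeping'' concern is exactly where the paper lands on the $\frac{1}{(d_i+1)(d_i+2)}$ prefactor, by treating all $d_i$ coordinates of $x_{-i}$ as the free coordinates of $T^{d_i}$.
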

\begin{proof}
Recall~\eqref{eqn:polyint} and $\Gamma(n) = (n-1)!$ for $n \in \mathbb{N}$. Let $r_p = \sum_w U_{pw} x_w$ be the rating for the $p$th action under an opponent strategy profile $x_{-i}$.

Then we want to compute $\mathbb{E}_{x_{-i} \sim Dir(\boldsymbol{1})}[(r_p - r_q)^2]$. Recall the volume of the simplex is $\frac{1}{d!}$. Then
\begin{align}
    \mathbb{E}_{x_{-i} \sim Dir(\boldsymbol{1})}[(r_p - r_q)^2] &= \frac{\int_{T^d} (r_p - r_q)^2 dx_{-i}}{\int_{T^d} dx_{-i}}
    \\ &= d! \int_{T^d} (r_p - r_q)^2 dx_{-i}
    \\ &= d! \int_{T^d} (\sum_w U^{(i)}_{pw} x_w - \sum_w U^{(i)}_{qw} x_w)^2 dx_{-i}
    \\ &= d! \int_{T^d} \Big[ (\sum_w \sum_y U^{(i)}_{pw} U^{(i)}_{py} x_w x_y) + (\sum_w \sum_y U^{(i)}_{qw} U^{(i)}_{qy} x_w x_y)
    \\ &\quad - 2 (\sum_w \sum_y U^{(i)}_{pw} U^{(i)}_{qy} x_w x_y) \Big] dx_{-i}
    \\ &= d! \sum_w \sum_y \Big[ \big( U^{(i)}_{pw} U^{(i)}_{py} \underbrace{\int_{T^d} x_w x_y dx_{-i}}_{\frac{2}{(d+2)!} \text{ if } w=y \text{ else } \frac{1}{(d+2)!}} \big)
    \\ &\quad + \big( U^{(i)}_{qw} U^{(i)}_{qy} \int_{T^d} x_w x_y dx_{-i} \big) - 2 \big( U^{(i)}_{pw} U^{(i)}_{qy} \int_{T^d} x_w x_y dx_{-i} \big) \Big]
    \\ &= \frac{d!}{(d+2)!} \sum_w \Big[ (U^{(i)2}_{pw} + U^{(i)2}_{qw} - 2U^{(i)}_{pw} U^{(i)}_{qw})
    \\ &+ \sum_y \big( U^{(i)}_{pw} U^{(i)}_{py} + U^{(i)}_{qw} U^{(i)}_{qy} - 2 U^{(i)}_{pw} U^{(i)}_{qy} \big) \Big]
    \\ &= \frac{1}{(d+1)(d+2)} \Big[ \sum_w (U^{(i)}_{pw} - U^{(i)}_{qw})^2 + (\sum_w U^{(i)}_{pw} - \sum_w U^{(i)}_{qw})^2 \Big]
    \\ &= \frac{1}{(d+1)(d+2)} \Big[ ||U^{(i)}_p - U^{(i)}_{q}||^2 + \big(1^\top (U^{(i)}_p - U^{(i)}_{q})\big)^2 \Big].
\end{align}
\end{proof}

\begin{proposition}
From player $i$'s perspective, the expected dis-similarity between two actions $p$ and $q$ under a uniform distribution over all factorize-able opponent strategy profiles $x_{-i} = \prod_{j \ne i} x_j$ is equal to
\begin{align}
    D^{(i)}_{pq} &= \prod_{j \ne i} \frac{1}{(d_j + 1)(d_j + 2)} \Big(
    \\ &\quad \sum_{a_{-i} \in \mathcal{A}_{-i}} \sum_{a'_{-i} \in \mathcal{A}_{-i}} \big( u_i(p, a_{-i}) - u_i(q, a_{-i}) \big) \big( u_i(p, a'_{-i}) - u_i(q, a'_{-i}) \big) \big( 2^{\#a=a'} \big) \Big)
\end{align}
where the integer $d_i = \vert \mathcal{A}_i \vert$ and ``$\#a$$=$$a'$'' $ = \sum_{j \ne i} \mathbf{1}[a_j = a_j']$ indicates the number of action matches between two opponent profiles.
\end{proposition}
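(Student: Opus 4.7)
The plan is to reduce the expectation to the simplex-moment identity used in the previous proposition, now applied independently to each opponent. Two observations drive the derivation: under $x_{-i} = \prod_{j \ne i} x_j$ with each $x_j$ drawn uniformly and independently over $\Delta^{d_j - 1}$, the expectation of a product over $j$ factorizes; and the pairwise second moment of a single uniform simplex strategy takes only two possible values, depending on whether two actions coincide.

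First I would introduce $\Delta(a_{-i}) = u_i(p, a_{-i}) - u_i(q, a_{-i})$ and write the rating gap linearly in the factorized strategy,
\begin{align*}
    r_p - r_q &= \sum_{a_{-i} \in \mathcal{A}_{-i}} \Delta(a_{-i}) \prod_{j \ne i} x_j(a_j).
\end{align*}
Squaring and exploiting independence of the $x_j$'s across players yields
\begin{align*}
    D^{(i)}_{pq} &= \mathbb{E}\big[(r_p - r_q)^2\big] = \sum_{a_{-i}, a'_{-i}} \Delta(a_{-i})\,\Delta(a'_{-i}) \prod_{j \ne i} \mathbb{E}_{x_j}\big[x_j(a_j)\,x_j(a'_j)\big].
\end{align*}

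Next I would invoke the single-simplex moment formula established in the previous proposition's proof (a direct consequence of the Dirichlet integral identity at the top of the appendix): for $x_j$ drawn uniformly over the simplex of distributions on $\mathcal{A}_j$,
\begin{align*}
    \mathbb{E}_{x_j}\big[x_j(a_j)\,x_j(a'_j)\big] &= \frac{1}{(d_j+1)(d_j+2)} \cdot 2^{\mathbf{1}[a_j = a'_j]},
\end{align*}
equal to $2/((d_j+1)(d_j+2))$ when the two actions agree and $1/((d_j+1)(d_j+2))$ otherwise. Substituting this into each factor and using $\prod_{j \ne i} 2^{\mathbf{1}[a_j = a'_j]} = 2^{\sum_{j \ne i} \mathbf{1}[a_j = a'_j]} = 2^{\#a=a'}$ pulls the $a_{-i},a'_{-i}$-independent prefactor $\prod_{j \ne i} \frac{1}{(d_j+1)(d_j+2)}$ outside the double sum, giving precisely the claimed expression.

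There is no real obstacle beyond careful bookkeeping. The only subtlety relative to the previous proposition is that independence across players causes the joint second moment to \emph{factor} into per-player moments, so the two-valued indicator $\mathbf{1}[a_{-i} = a'_{-i}]$ from the non-factorized case is replaced by a product of per-coordinate indicators; summing these indicators in the exponent is exactly how the factor $2^{\#a=a'}$ arises, rather than a single $\{1,2\}$ switch.
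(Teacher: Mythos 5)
Your proposal is correct and follows essentially the same route as the paper's proof: expand $(r_p-r_q)^2$ into a double sum over opponent profiles, use independence of the $x_j$ to factor the joint second moment into per-player simplex moments $\tfrac{2^{\mathbf{1}[a_j=a'_j]}}{(d_j+1)(d_j+2)}$, and collect the per-coordinate indicators into the exponent $2^{\#a=a'}$. The only cosmetic difference is that you phrase the per-player moment as an expectation identity while the paper carries out the normalized Dirichlet integrals explicitly; the substance is identical.
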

\begin{proof}
Let $r_p = \sum_{a_{-i} \in \mathcal{A}_{-i}} u_i(p, a_{-i}) \prod_{j \ne i} x_{j, a_j}$ be the rating for the $p$th action under an opponent profile $x_{-i} = \prod_{j \ne i} x_j$. Let $dx_{-i}$ be a shorthand for $dx_{-i}$. Likewise, let $\int_{T^{d_{-i}}}$ be a shorthand for $\int_{T^{d_1}} \ldots \int_{T^{d_{i-1}}} \int_{T^{d_{i+1}}} \ldots \int_{T^{d_n}}$.

Then we want to compute $\mathbb{E}_{x_j \sim Dir(\boldsymbol{1}) \forall j \ne i}[(r_p - r_q)^2]$. Recall the volume of a simplex is $\frac{1}{d!}$. Then
\begin{align}
    &\mathbb{E}_{x_j \sim Dir(\boldsymbol{1}) \forall j \ne i}[(r_p - r_q)^2]
    \\ &= \frac{\int_{T^{d_{-i}}} (r_i - r_i')^2 dx_{-i}}{\int_{T^{d_{-i}}} dx_{-i}}
    \\ &= \Big( \prod_{j \ne i} d_j! \Big) \int_{T^{d_{-i}}} (r_i - r_i')^2 dx_{-i}
    \\ &= \Big( \prod_{j \ne i} d_j! \Big) \int_{T^{d_{-i}}} \Big( \sum_{a_{-i} \in \mathcal{A}_{-i}} u_i(p, a_{-i}) \prod_{j \ne i} x_{j, a_j} - \sum_{a_{-i} \in \mathcal{A}_{-i}} u_i(q, a_{-i}) \prod_{j \ne i} x_{j, a_j} \Big)^2 dx_{-i}
    \\ &= \Big( \prod_{j \ne i} d_j! \Big) \int_{T^{d_{-i}}} \Big( \sum_{a_{-i} \in \mathcal{A}_{-i}} \prod_{j \ne i} x_{j, a_j} \big( u_i(p, a_{-i}) - u_i(q, a_{-i}) \big) \Big)^2 dx_{-i}
    \\ &= \Big( \prod_{j \ne i} d_j! \Big) \int_{T^{d_{-i}}} \Big(
    \\ &\quad \sum_{a_{-i} \in \mathcal{A}_{-i}} \sum_{a'_{-i} \in \mathcal{A}_{-i}} (\prod_{j \ne i} x_{j, a_j}) (\prod_{j \ne i} x_{j, a'_j}) \big( u_i(p, a_{-i}) - u_i(q, a_{-i}) \big) \big( u_i(p, a'_{-i}) - u_i(q, a'_{-i}) \big) \Big) dx_{-i}
    \\ &= \Big( \prod_{j \ne i} d_j! \Big) \int_{T^{d_{-i}}} \Big( 
    \\ & \sum_{a_{-i} \in \mathcal{A}_{-i}} \sum_{a'_{-i} \in \mathcal{A}_{-i}} \big( u_i(p, a_{-i}) - u_i(q, a_{-i}) \big) \big( u_i(p, a'_{-i}) - u_i(q, a'_{-i}) \big) \big( \prod_{j \ne i} x_{j, a_j} x_{j, a'_j} \big) \Big) dx_{-i}
    \\ &= \Big( \prod_{j \ne i} d_j! \Big) \Big(
    \\ & \sum_{a_{-i} \in \mathcal{A}_{-i}} \sum_{a'_{-i} \in \mathcal{A}_{-i}} \big( u_i(p, a_{-i}) - u_i(q, a_{-i}) \big) \big( u_i(p, a'_{-i}) - u_i(q, a'_{-i}) \big) \big( \prod_{j \ne i} \underbrace{\int_{T^{d_{j}}} x_{j, a_j} x_{j, a'_j} dx_j}_{\frac{2}{(d_j+2)!} \text{ if } a_j=a_j' \text{ else } \frac{1}{(d_j+2)!}} \big) \Big)
    \\ &= \Big( \prod_{j \ne i} d_j! \Big) / \Big( \prod_{j \ne i} (d_j + 2)! \Big) \Big(
    \\ & \sum_{a_{-i} \in \mathcal{A}_{-i}} \sum_{a'_{-i} \in \mathcal{A}_{-i}} \big( u_i(p, a_{-i}) - u_i(q, a_{-i}) \big) \big( u_i(p, a'_{-i}) - u_i(q, a'_{-i}) \big) \big( 2^{\#a=a'} \big) \Big)
    \\ &= \prod_{j \ne i} \frac{1}{(d_j + 1)(d_j + 2)} \Big(
    \\ & \sum_{a_{-i} \in \mathcal{A}_{-i}} \sum_{a'_{-i} \in \mathcal{A}_{-i}} \big( u_i(p, a_{-i}) - u_i(q, a_{-i}) \big) \big( u_i(p, a'_{-i}) - u_i(q, a'_{-i}) \big) \big( 2^{\#a=a'} \big) \Big).
\end{align}
\end{proof}

\color{black}
\section{Warmup: Game-Theoretic Ranking of {\em rock-paper-scissors}}
\label{app:toy_games}

\begin{figure}
    \centering
    \includegraphics[width=0.9\textwidth]{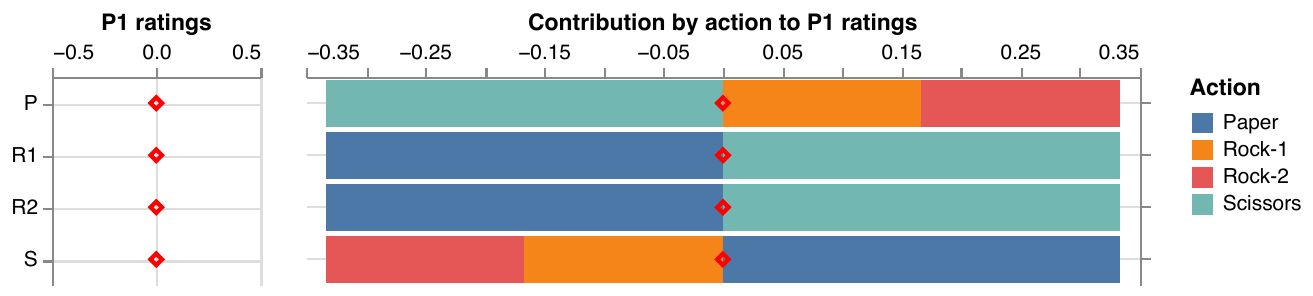}
    \caption{\textcolor{black}{We visualise the marginal NE rating contributions of each player 2 action to each player 1 action. We show that a) all actions receive zero ratings and b) the rating of each action is interpretable and corresponds to our intuition.}}
    \label{fig:rrps_marginal_contrib}
\end{figure}

We provide a demonstration of game-theoretic ranking on the classic $2$-player, $3$-action zero-sum Rock-Paper-Scissors game. \citet{balduzzi2018re} proposed rating actions under the max-entropy Nash equilibrium of the game. In that case, each action receives a rating of zero. If we duplicate the Rock action, for example, the ratings remain zero under the max-entropy NE. Our proposed LLE based approach returns the same ratings.

\def\sgtextcolor{black}%
\def\sglinecolor{black}%
\begin{figure}[htb]\hspace*{\fill}%
\begin{game}{3}{3}
& Rock & Paper & Scissors \\
Rock & $0,0$ & $-1,+1$ & $+1,-1$ \\
Paper & $+1,-1$ & $0,0$ & $-1,+1$ \\
Scissors & $-1,+1$ & $+1,-1$ & $0,0$
\end{game}\hspace*{\fill}%
\hspace*{\fill}%
\begin{game}{4}{4}
& Rock\-1 & Rock\-2 & Paper & Scissors \\
Rock\-1 & $0,0$ & $0,0$ & $-1,+1$ & $+1,-1$ \\
Rock\-2 & $0,0$ & $0,0$ & $-1,+1$ & $+1,-1$ \\
Paper & $+1,-1$ & $+1,-1$ & $0,0$ & $-1,+1$ \\
Scissors & $-1,+1$ & $-1,+1$ & $+1,-1$ & $0,0$
\end{game}\hspace*{\fill}%
\caption[]{\textcolor{black}{Rock-Paper-Scissors (RPS) Game and RPS Game with Duplicate Rock Action.}}
\end{figure}

In Figure~\ref{fig:rrps_marginal_contrib}, we show that the equilibrium underlying the scalar ratings reflects incentive structure of the game --- player 1 does not wish to deviate to the {\em Paper} action precisely because doing so would lead to losses against the {\em Scissors} action despite wins against the two {\em Rock} actions.

\color{black}

\section{Vulnerability of Standard Shannon Entropy}
\label{app:vulnerability}

\textcolor{black}{Prior work has shown max-entropy Nash equilibrium (equivalently max-entropy (C)CE) to be invariant to clones in \tpzs games~\citep{balduzzi2018re}. We include a simple experiment here to illustrate why max-entropy Nash equilibrium becomes vulnerable to redundancy in the \npgs setting.}

\paragraph{Chicken Game}

Consider the $2$-player $2$-action general-sum {\em Chicken} game. Let players receive $0$ if they both {\em swerve}. If one player swerves while the other goes straight, the one who swerves receives $-1$ and the other $+1$. If both go straight, then they both receive $-12$. This game has three NEs. Two are pure in which one player goes straight and the other swerves. The third is symmetric and the max-entropy NE of this game; each player swerves with probability $\sfrac{11}{12}$. Both \emph{straight} and \emph{swerve} have an expected payoff of $-\sfrac{1}{12}$ under this NE. If we duplicate the \emph{straight} action, the original max-entropy NE becomes the \emph{min}-entropy NE!
The other two NEs representing each player swerving while the other goes straight now have higher entropy. 
The player that swerves rates their swerve and straight actions as $-1$ and $-12$ respectively. The player that goes straight rates their swerve and straight actions as $0$ and $1$ respectively, demonstrating that the max-entropy NE solution concept is not invariant to clones in the general-sum setting.

The story in the max-entropy CCE setting is more nuanced. We find that although the CCE ratings change under the addition of clones, the ratio of the ratings of the two actions remains stable. Further investigation is necessary to understand whether max-entropy CCE ratings are equivariant (robust up to affine transformations of the ratings) to cloned actions.

\def\sgtextcolor{black}%
\def\sglinecolor{black}%
\begin{figure}[htb]\hspace*{\fill}%
\begin{game}{2}{2}
& Swerve & Straight \\
Swerve & $0,0$ & $-1,+1$ \\
Straight & $+1,-1$ & $-12,-12$
\end{game}\hspace*{\fill}%
\hspace*{\fill}%
\begin{game}{3}{3}
& Swerve & Straight & Straight \\
Swerve & $0,0$ & $-1,+1$ & $-1,+1$ \\
Straight & $+1,-1$ & $-12,-12$ & $-12,-12$ \\
Straight & $+1,-1$ & $-12,-12$ & $-12,-12$
\end{game}\hspace*{\fill}%
\caption[]{\textcolor{black}{Chicken Game and Chicken Game with Duplicate Straight Actions.}}
\end{figure}

\begin{figure}
    \centering
    \includegraphics[width=0.9\textwidth]{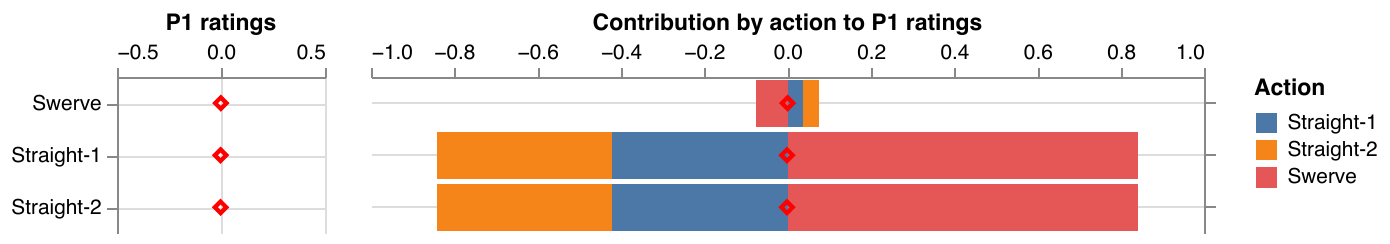}
    \caption{\textcolor{black}{We visualise the marginal NE rating contributions of each player 2 action to each player 1 action. We show that a) all actions receive zero ratings and b) the rating of each action is interpretable and corresponds to our intuition.}}
    \label{fig:chicken_marginal_contrib}
\end{figure}

\textcolor{black}{By contrast, we show in Figure~\ref{fig:chicken_marginal_contrib} that all actions would receive zero ratings under our proposed equilibrium ratings. In other words, our equilibrium selection procedure continues to select the mixed-strategy NE in the original game, unaffected by the additional redundant ``straight'' action. Further, the widths of the bars are interpretable: suggesting that deviating to the {\em Swerve} action is a safe option without major risk or reward. Deviating to one of the {\em Straight} actions however, can lead to high rewards but also catastrophic losses.}

\section{Experiments}

\subsection{Simulated model and prompt improvement path} \label{app:simulated_model_prompt_improvement}

Algorithm~\ref{alg:evo_improvement} describes our simulated model and prompt improvement procedure. At each iteration, we add a new prompt and a model following an evolutionary procedure. We require all prompts to be probability distributions over skill dimensions. We model for a transitive dimension for models by representing each model vector as a sum of probability vectors over skills. A new model is added to the set of models $\gA_m$ if and only if it becomes top-ranked according to the rating function $r$. A new prompt is added as long as it is the best-of-$P'$ sampled prompts and does not have to be top-ranked.

\begin{algorithm}
\centering
\caption{Evolutionary model and prompt selection procedure \label{alg:evo_improvement}}
\begin{algorithmic}[1]
\State Let $K$ be the number of orthogonal skill dimensions.
\State Let $r: \gA_p \times \gA_m \rightarrow \vr_p, \vr_m$ be a rating function assigning a scalar rating to each action.
\State Let $P_0$, $M_0$ be the number of initial prompts and models.
\State Let $P'$, $M'$ be the number of sampled candidate prompts and models at each iteration.
\State
\State $\gA^0_p \sim \textrm{Dirichlet}(\vone_{1:K}, P_0)$ \Comment{$P_0$ sampled initial prompts.}
\State $\gA^0_m \sim \textrm{Dirichlet}(\vone_{1:K}, M_0)$ \Comment{$M_0$ sampled initial models.}
\State
\For{$t \in [1, \dots]$}
\If {additional prompts} \Comment{If adding new prompts.}
    \State $\gA'_p \sim \textrm{Dirichlet}(\vone_{1:K}, P')$  \Comment{Sampling $P'$ candidate prompts.}
    \State $\vr_p, \_ \gets r(A'_p \cup A_p, A_m)$
    \State $\gA_p \gets \gA_p \cup \{ A'_p[\arg \max \vr_p[:P'] ] \}$  \Comment{Add best-of-$P'$ prompt.}
\EndIf
\State $\gA'_m \gets \vzero$
\While {true}
    \State $\Delta_m \gets \textrm{Dirichlet}(\vone_{1:K}, M')$ \Comment{Sampling $M'$ model improvement vectors.}
    \State $\_, \vr_m \gets r(A_p, \{ \gA'_m + \Delta_m \} \cup A_m)$ \Comment{Evaluate improved candidate models.}
    \State $\gA'_m \gets \gA'_m + \Delta_m[\arg \max \vr_m[:M']]$
    \If {$\arg \max \vr_m[:M'] = \arg \max \vr_m$}
        \State $\gA_m = \{ \gA'_m[\arg \max \vr_m] \} \cup \gA_m$  \Comment{Add a new top-ranked model.}
        \State \textbf{break}
    \EndIf
\EndWhile
\EndFor
\end{algorithmic}
\end{algorithm}

\subsection{Equilibrium-solving hyper-parameters}
\label{app:solver_params}

We use the same set of hyper-parameters for all our experiments. For affinity-entropy $H^p_a(x)$, we use $p=1$ and set kernel variance to $\expnumber{1}{-6}$. To solve for a max affinity-entropy distribution we use gradient descent. The max affinity-entropy distribution is then used in NE and CCE solving.

For NE solving using LLE approximation, we initialize temperature $\tau = 1.0$ which is annealed exponentially with a decay rate of $0.95$ every $250$ gradient updates if and only if the exploitability in the annealed game $\gL^\tau(\vx)$ (Equation~\eqref{eq:qre_loss}) is at most $\expnumber{1}{-5}$. We set the terminal temperature to $\tau = \expnumber{1}{-2}$. We early terminate the equilibrium solving if we have found an $\epsilon-$NE with $\epsilon = \expnumber{1}{-3}$.
For CCE solving, the optimization problem is convex and we minimize Equation~\ref{eq:mrecce} directly.
For gradient descent, we use an Adam optimizer \cite{kingma2014adam} with a fixed learning rate $\expnumber{1}{-2}$ for all steps (maximizing affinity-entropy and equilibrium solving).

\subsection{The \arenahard evaluation data}
\label{app:arena_hard_data}

We evaluate our method on the \arenahard dataset \citep{arenahard2024} with 500 prompts and 17 competing models. The set of prompts as well as model responses are downloaded from LMSYS data repository (\url{https://huggingface.co/spaces/lmsys/arena-hard-browser}), with the exception of \geminipro and {\tt gemini-1.5-flash-api-0514}. As we need to tabulate the payoff tensor for all model pairs, we sampled 8 preference ratings using \geminipro for each model pair, with 4 samples for each permutation to account for potential position bias of the LLM rater. Pairwise model utility is averaged over all ratings samples. %

\subsection{Risk-dominant equilibria}
\label{app:fragile_equilibria}
Our \koh evaluation game admits a multitude of Nash equilibria, among them 80 are pure-strategy NEs (see Table~\ref{tab:pure_ne}). Additionally, we computed 128 mixed-strategy NEs with exploitability at most $\expnumber{1}{-2}$ that each derives a distinct set of ratings. In particular, one of the 128 mixed-strategy NEs is pre-computed by our NE solving and selection procedure by tracing the QRE continuum, which we refer to as the $0$-th equilibrium, or $\vx^0$. 

\begin{table}[]
\centering
\caption{Prompt and king actions that each define 16 pure-strategy Nash equilibria --- any rebel action except the model played by the king player is a pure-strategy NE.\label{tab:pure_ne}}
\begin{small}
\begin{tabular}{c|c}
{\bf Prompt} & {\bf King}   \\ \hline
``Can you implement a python tool that is intended to ru...'' & gemini-1.5-pro-api-0514 \\
``Hi. I have this URL which I can paste in my Microsoft ...'' & gemini-1.5-pro-api-0514 \\
``Please provide a simple RESPONSE to the following PROM...'' & claude-3-5-sonnet-20240620 \\
``Take on the rol eof an Gherkin expert. Can you improve...'' & claude-3-5-sonnet-20240620 \\
``Write a small python function that get all the links o...'' & gemini-1.5-flash-api-0514  \\ \hline
\end{tabular}%
\end{small}
\end{table}

A longstanding challenge in game theory is that of equilibrium selection. Suppose that every player knows that there are many equilibria in the game, each player must confront the following question during play: out of all equilibria, which equilibrium strategy should I play and relatedly, which equilibrium would each of my co-players play? This is critical, as miscoordinating could lead to arbitrarily bad outcome, despite each player playing one of its equilibrium strategies. For instance, everyone driving on the right or left hand side of the road are two valid equilibria, but miscoordinating would be devastating.

It is for this reason that the notion of risk-dominance of \citet{harsanyi1988general} is critically important: the Nobel-prize winning theorem suggests that players would each iterate on their prior beliefs over which equilibria its co-players would play and choose the one that is the least {\em risky} when players {\em miscoordinate} under such priors. Here, we show empirically that our solution concept leads to risk-dominant equilibria as suggested by \citet{herings2010homotopy}. To do so, we simultaneously minimize the exploitability of several profiles in parallel with a regularizer that maximizes the $L_2$ rating differences between any two profiles by gradient descent as in \cite{liu2024nfgtransformer}. This yields an additional 127 NEs with exploitability at most $\expnumber{1}{-2}$ that we analyze in Figure~\ref{fig:risk_dominance_analysis}. 

Figure~\ref{fig:risk_dominance_analysis}~\figtop shows the 128 mixed-strategy NEs with distinct model ratings. Figure~\ref{fig:risk_dominance_analysis}~\figcenter shows the expected payoffs to player $i$ when it plays its $p$-th equilibrium strategy $x^p_i$ when other players uniformly choose one of theirs, or $\E_{q \sim \pi_{u}} \left[ u_i(x^p_i, x^q_{-i}) \right]$ with $\pi_u$ a uniform distribution over 128 equilibria. In yellow, we show the sum of per-player expected payoffs. We confirm that many NEs are indeed {\em risky}, as their stability relies heavily on all players coordinating on the same equilibrium. Figure~\ref{fig:risk_dominance_analysis}~\figbottom takes things one step further and follows the intuition of risk dominance more closely. Starting from a uniform prior belief over player $i$'s choice of equilibria, $\pi^0_i = \pi_u$, each player iterate their believes over other players' choices of equilibrium based on the expected payoff of them playing each equilibrium. 

Specifically, we let 
\begin{align}
    \pi^{t+1}_i = \softmax \Big( \log \pi^t_i + \eta \E_{\substack{\forall j \neq i \\ t(j) \sim \pi_j}} \left[ \vu_i(\dots, x^{t(i-1)}_{i-1}, x^{t(i+1)}_{i+1}, \dots) \right] \Big)
\end{align}
with $\eta = \expnumber{1}{-2}$ the step-size and we compute the expected payoffs to player $i$ when playing its $k$-th equilibrium at $T = 10,000$ as
\begin{align}
\E_{\substack{\forall j \neq i \\ e(j) \sim \pi^T_{j}}} \left[ u_i(\dots, x^{e(i-1)}_{i-1}, x^k_i, x^{e(i+1)}_{i+1}, \dots) \right]
\end{align}

Ordered by the sum of expected payoffs for all players, we observe that the Nash equilibrium our procedure selects (equilibrium $\vx^0$) is the least risky among 128 mixed-strategy NEs of the game, without any player being particularly worse off than others even when players miscoordinate.

\begin{sidewaysfigure}
    \centering
    \includegraphics[width=1.0\textwidth]{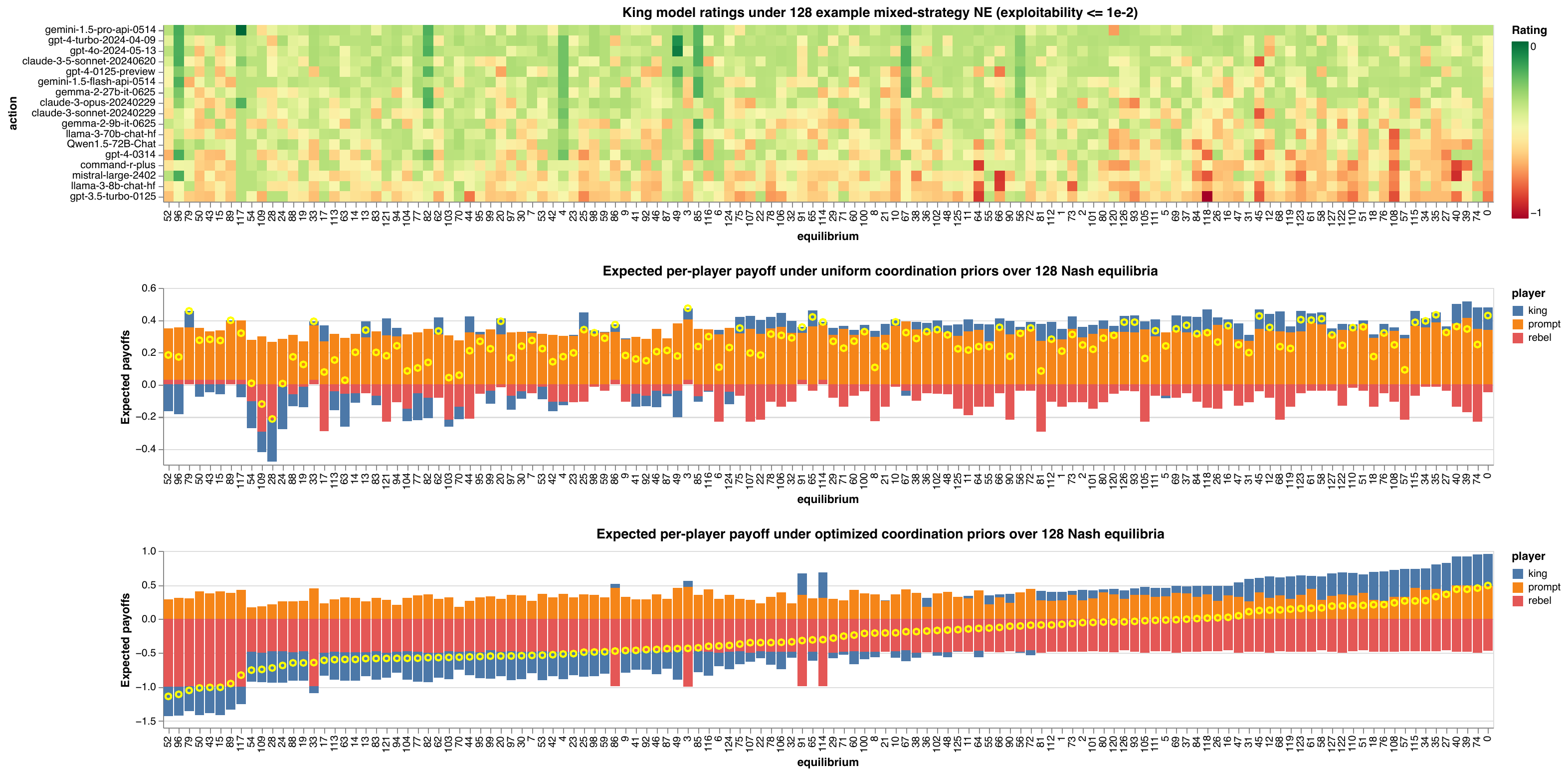}
    \caption{From top to bottom: a) we show the distinct king player action ratings derived from 128 mixed-strategy NEs of the \koh game. All NEs have exploitability at most $\epsilon \le \expnumber{1}{-2}$; b) we show the expected payoff to each player under uniform priors over their 128 equilibria; yellow circles show the sum of expected per-player payoffs; c) we show the same analysis as in b) but the expectation is taken under optimized equilibrium priors. Equilibrium 0 (rightmost) is the LLE our NE solving procedure select. }
    \label{fig:risk_dominance_analysis}
\end{sidewaysfigure}

\subsection{Invariant Evaluation} \label{app:noisy_clone_invariance}

\begin{figure}
    \centering
    \includegraphics[width=\textwidth]{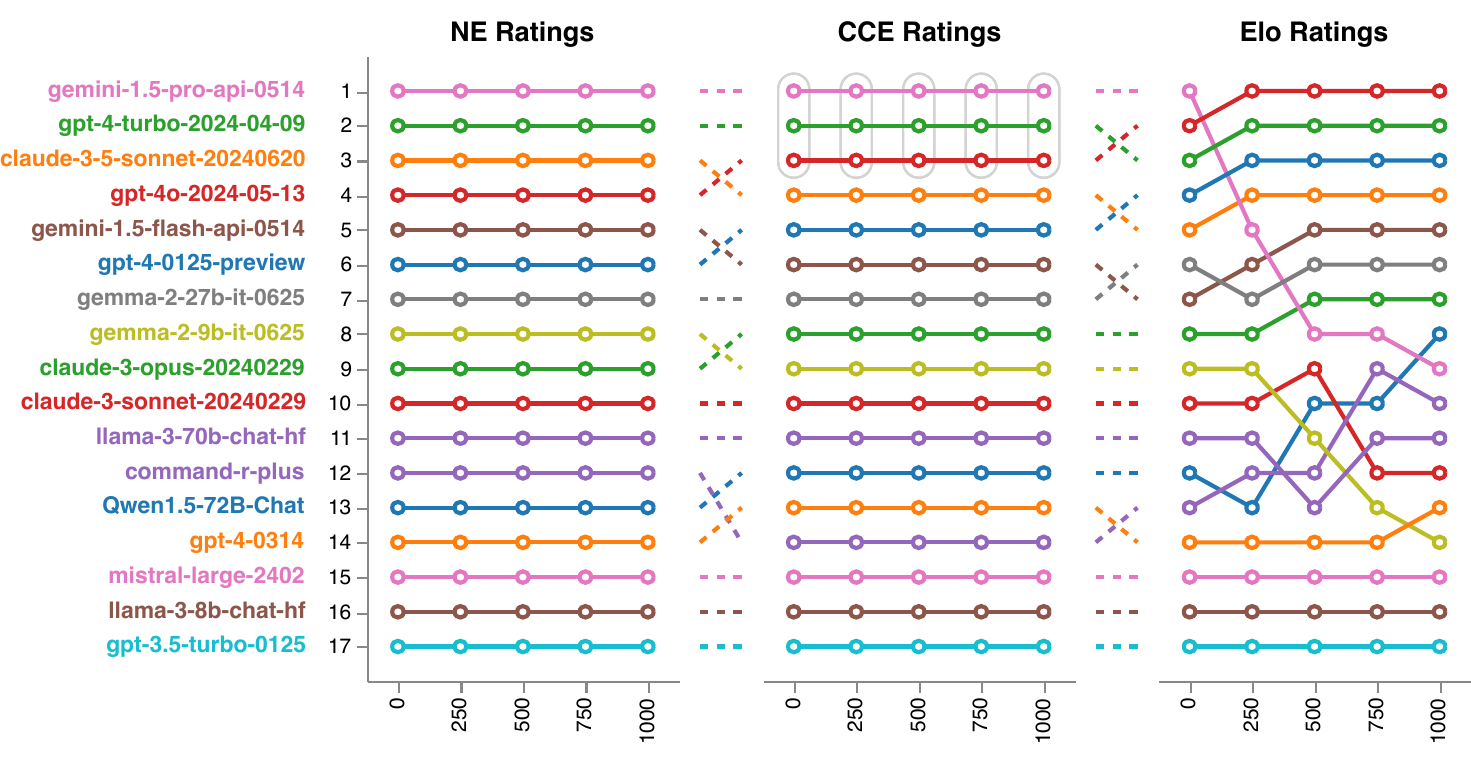}
    \caption{We introduce an increasing number of redundant copies of prompts adversarial to \geminipro with noise sampled from $\textrm{Uniform}(-0.01, 0.01)$ applied to their payoffs. Equilibrium ratings with a clone invariant selection procedure degrades gracefully to noisy redundancy while the Elo ratings become incrementally skewed. Models at the same rank (with an absolute rating difference at most $\expnumber{1}{-4}$) are grouped in grey and ordered alphabetically. We caveat that the specific rankings reported are subject to the LLM preference model used which in this case may exhibit a self-preference to the Gemini family of models.}
    \label{fig:noisy_clone_invariance}
\end{figure}

We show in Figure~\ref{fig:noisy_clone_invariance} the effect of introducing {\em near} redundant adversarial prompts on the equilibrium ratings. While our invariant property is limited to exact clones, our results show that our approach results in rankings that degrade gracefully in this approximate case, even with 1,000 adversarial prompts. The Elo rating system suffers from such bias in data similarly as in the exact case Figure~\ref{fig:exact_clone_invariance_with_shannon}.

In Figure~\ref{fig:rating_breakdown} we provide a detailed breakdown of our NE and CCE ratings results (without redundant adversarial prompts). We show the actions of each player ranked by their equilibrium ratings and by their support under the equilibrium marginal distribution.

\begin{figure}
    \centering
    \includegraphics[width=\textwidth]{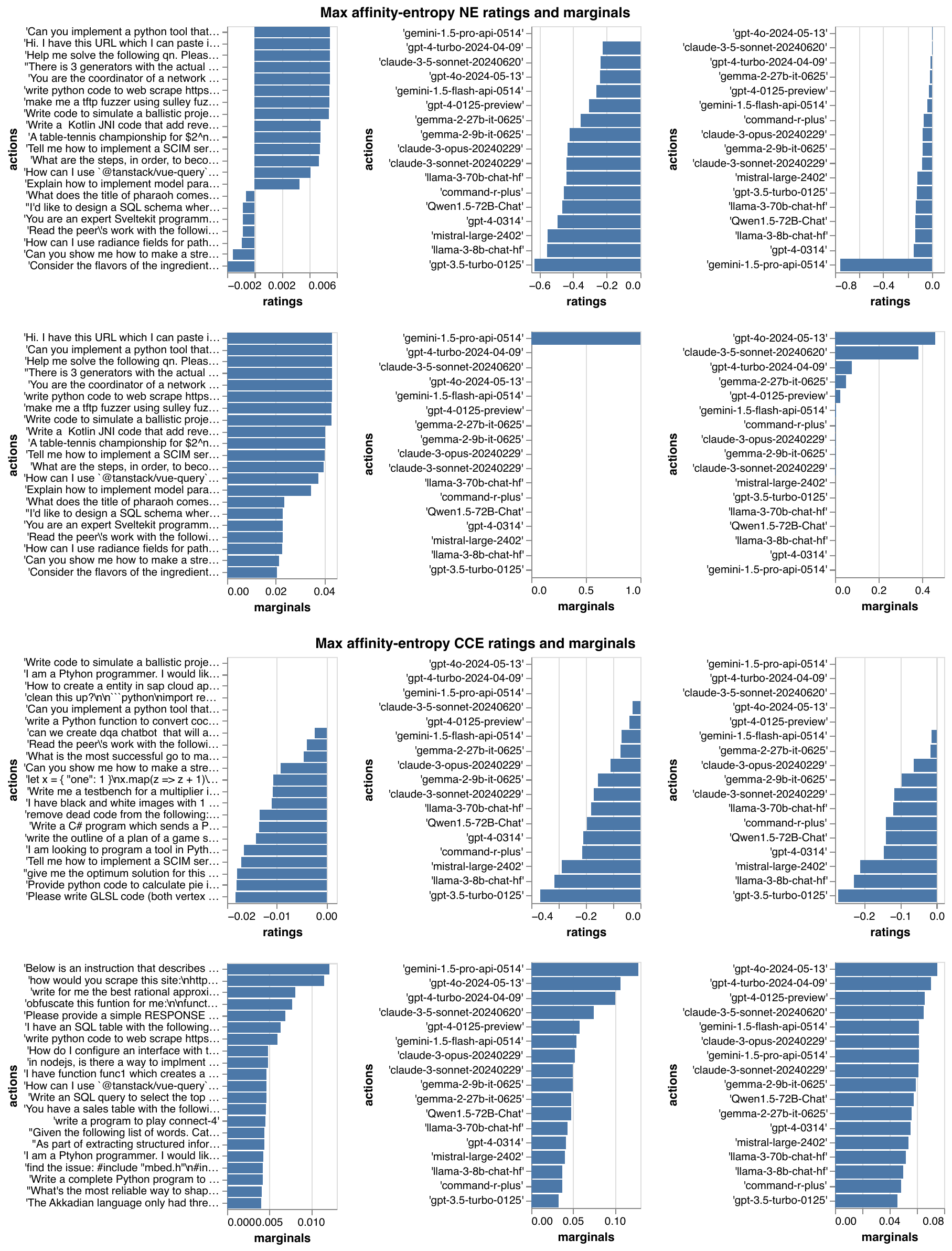}
    \caption{We show actions of each player ranked by their rating and equilibrium support under NE \figtop and CCE \figbottom profiles respectively.}
    \label{fig:rating_breakdown}
\end{figure}

\end{document}